 \newtheorem{theorem}{Theorem}[section]
 \newtheorem{lemma}[theorem]{Lemma}
 \newtheorem{definition}[theorem]{Definition}
 \newtheorem{example}[theorem]{Example}
\author{
\alignauthor
Babak Bagheri Harari \hspace{+0.15in}
Val Tannen \\
\affaddr{Computer \& Information Science Department} \\
\affaddr{University of Pennsylvania} \\
}
\begin{document}

\newcommand{\reminder}[1]{\textcolor{red}{#1}}
\newcommand{\eat}[1]{}

\newcommand{\A}{\mathcal{A}} \newcommand{\B}{\mathcal{B}}
\newcommand{\C}{\mathcal{C}} \newcommand{\D}{\mathcal{D}}
\newcommand{\E}{\mathcal{E}} \newcommand{\F}{\mathcal{F}}
\newcommand{\G}{\mathcal{G}} \renewcommand{\H}{\mathcal{H}}
\newcommand{\I}{\mathcal{I}} \newcommand{\J}{\mathcal{J}}
\newcommand{\K}{\mathcal{K}} \renewcommand{\L}{\mathcal{L}}
\newcommand{\M}{\mathcal{M}} \newcommand{\N}{\mathcal{N}}
\renewcommand{\O}{\mathcal{O}} \renewcommand{\P}{\mathcal{P}}
\newcommand{\Q}{\mathcal{Q}} \newcommand{\R}{\mathcal{R}}
\renewcommand{\S}{\mathcal{S}} \newcommand{\T}{\mathcal{T}}
\newcommand{\U}{\mathcal{U}} \newcommand{\V}{\mathcal{V}}
\newcommand{\W}{\mathcal{W}} \newcommand{\X}{\mathcal{X}}
\newcommand{\Y}{\mathcal{Y}} \newcommand{\Z}{\mathcal{Z}}

\newcommand{\ansa}[3]{\ensuremath{\Gamma_{#1}^{#2}(#3)}\xspace}
\newcommand{\ansk}[2]{\ensuremath{\text{ans}^{#1}[#2]}\xspace}
\newcommand{\gat}[3]{\ensuremath{\Phi_{#1}^{#2}(#3)}\xspace}
\newcommand{\gatk}[3]{\ensuremath{\Phi_{#1}^{#2}[#3]}\xspace}

\newcommand{\ra}{\rightarrow}
\newcommand{\sq}{\bar}
\newcommand{\lt}{\leadsto}
\newcommand{\yield}[2]{\Rightarrow_{#1}^{#2}}
\newcommand{\pre}{\text{pre}}
\newcommand{\post}{\text{post}}
\newcommand{\gaifman}[1]{\ensuremath{\G}(#1)\xspace}

 \newcommand{\entails}{\ensuremath{\models}\xspace}

\newcommand{\tup}[1]{\langle#1\rangle}            
\newcommand{\lisom}[2]{\ensuremath\equiv^{#1}_{#2}\xspace}            

\newcommand{\yieldAdd}[1]{\ensuremath\text{toAdd}(#1)\xspace}            
\newcommand{\yieldDell}[1]{\ensuremath\text{toDell}(#1)\xspace}

\newcommand{\ucqn}{UCQ\ensuremath{^{\neg}}\xspace}
\newcommand{\gaq}[2]{\text{GA}_{#1}^{#2}\xspace}

\newcommand{\add}{\textbf{add}}
\newcommand{\del}{\textbf{del}}
\newcommand{\set}[1]{\{#1\}}
\newcommand{\ews}[1]{\ensuremath{\eta[#1]}}
\newcommand{\ans}[1]{\ensuremath{\textit{ans}(#1)\xspace}}
\newcommand{\support}[1]{\ensuremath{\text{support}(#1)}\xspace}

\newcommand{\const}[1]{\textit{const}(#1)}

\newcommand{\contd}{\ensuremath{\textit{contd}}\xspace}
\newcommand{\comp}[1]{\ensuremath{\textit{comp}(#1)}\xspace}
\newcommand{\relat}[1]{\ensuremath{\textit{rel}(#1)}\xspace}
\newcommand{\dleq}{\ensuremath{\textit{ :- }}\xspace}
\newcommand{\cooc}[1]{\ensuremath{\textit{coocc(#1)}}\xspace}
\newcommand{\adom}[1]{\ensuremath{\textit{adom}(#1)}\xspace}

\newcommand{\homo}[1]{~\ensuremath{\ra^{#1}}~}
\newcommand{\equivv}[2]{\ensuremath{\textit{equiv}(#1, #2)}\xspace}
\newcommand{\flip}[1]{\ensuremath{\textit{flip}(#1)}\xspace}

\newcommand{\counts}[1]{\ensuremath{\textit{count}(#1)}\xspace}

 \newcommand{\nterms}[1]{\ensuremath{\textit{terms}(#1)}\xspace}

\newcommand{\idfunction}[1]{\ensuremath{\textit{id}(#1)}\xspace}

\newcommand{\domain}[1]{\ensuremath{\textsc{dom}(#1)}\xspace}
\newcommand{\image}[1]{\ensuremath{\textsc{im}(#1)}\xspace}
\newcommand{\CONST}{\C}
\newcommand{\TERMS}{\T}
\newcommand{\VARIABLES}{\V}
\newcommand{\arity}[1]{\textsc{arity}(#1)\xspace}
\newcommand{\HERBRAND}{\ensuremath{\mathbb{U}}\xspace}
\newcommand{\IDRel}{\textit{id}}

\newcommand{\ISM}[1]{\equiv_{#1}}
\newcommand{\ISMA}[2]{\equiv_{#2}^{#1}}

\newcommand{\HMP}[1]{ \triangleright_{#1}}
\newcommand{\HMPA}[2]{ \triangleright_{#2}^{#1}}
\newcommand{\true}{\mathsf{true}}
\newcommand{\false}{\mathsf{false}}

\newcommand{\disting}[1]{\ensuremath{\textit{disting}(#1)}\xspace}
\newcommand{\nondisting}[1]{\ensuremath{\textit{nondisting}(#1)}\xspace}
\newcommand{\var}[1]{\ensuremath{\textit{var}(#1)}\xspace}
\newcommand{\ineq}[1]{\mathit{\textit{ineq}(#1)}\xspace}

\newcommand{\deq}{\ensuremath{~\textit{:-}}~}
\newcommand{\coqu}[3]{\ensuremath{#1(#2,~\textit{c}(#3)) \deq\xspace}}


\newcommand{\myi}{\emph{(i)}\xspace}
\newcommand{\myii}{\emph{(ii)}\xspace}
\newcommand{\myiii}{\emph{(iii)}\xspace}
\newcommand{\myiv}{\emph{(iv)}\xspace}
\newcommand{\myv}{\emph{(v)}\xspace}
\newcommand{\myvi}{\emph{(vi)}\xspace}

\newcommand{\extn}[1]{\text{ext}(#1)\xspace}
\newcommand{\core}[1]{\text{core}(#1)\xspace}

\title{Decidability of Equivalence of Aggregate Count-Distinct
  Queries}

\maketitle

\abstract{We address the problem of equivalence of count-distinct
  aggregate queries, prove that the problem is decidable, and can be
  decided in the third level of Polynomial hierarchy. We introduce the
  notion of core for conjunctive queries with comparisons as an
  extension of the classical notion for relational queries, and prove
  that the existence of isomorphism among cores of queries is a
  sufficient and necessary condition for equivalence of conjunctive
  queries with comparisons similar to the classical relational
  setting. However, it is not a necessary condition for equivalence of
  count-distinct queries. We introduce a relaxation of this condition
  based on a new notion, which is a potentially new query equivalent
  to the initial query, introduced to capture the behavior of
  count-distinct operator.  }

\section{Introduction}

The problem of deciding equivalence among conjunctive aggregate
count-distinct queries has been investigated starting from late 90s,
and the problem is known to be decidable for most of the aggregate
operators \cite{CoSN05,NuSS98,Cohen05,CoNS07,NuSS98} including SUM,
Average, and COUNT. For for count-distinct queries, although some
sufficient conditions have been proposed \cite{NuSS98,CoNS07}, it has
still been open if the problem is decidable or not.  In this paper we
study the problem of equality of count-distinct conjunctive queries,
and provide a sufficient and necessary condition for the
problem. First, we introduce the notion of core for conjunctive
queries with comparisons. We show that similar to the classic results
for relational conjunctive queries, the existence of an isomorphism
between cores of conjunctive queries with comparisons is a sufficient
and necessary conditions for their equality. While this condition
provides us with a sufficient condition for the equality of
conjunctive queries, it is easy to show that existence of isomorphism
is not a necessary condition for equivalence of count-distinct
queries. The necessary and sufficient condition is obtained by a
relaxation of the above condition based on a newly introduced notion
of flip of a query. The flip of a count-distinct query is a
potentially different count-distinct query equal to the original
query, obtained by flipping the direction of some of the comparisons
among the variables of the query. We show that given two queries to
compare, they are equal if and only if the first query or its flip is
isomorphic to the second query or its flip. Both computing of the core
of a query and its flip can be done in $\Delta^{\pi}_3$, which makes
an upper-bound for the complexity of our suggested algorithm.


\section{Preliminaries}

We consider an ordered dense domain $\Delta$, and a countably infinite
set $\VARIABLES$ of \emph{variables}. \emph{Terms} $\T$ are either
constants from $\Delta$ or variables, i.e., $\T=\Delta\cup\VARIABLES$.

\bigskip
\noindent{\bf Relations.} 
Given $n$ possibly different sets $T_1, \dots, T_n\subseteq\T$ of
terms, an \emph{n-ary relation} $R/n$ between them is a subset of
their cartesian products: $R\subseteq T_1 \times \dots \times T_n$.
We might simply use the term \emph{relation}, when the arity is
implicit or not relevant.
The \emph{identity relation} over a given set $T\subseteq\TERMS$,
denoted as $\IDRel(T)$ is the bijection relation from $T$ to $T$ that
maps all the terms to themselves, i.e., $\IDRel(t)= t$ for all
$t\in T$.
Given a binary relation $R: S\rightarrow S'$, we denote with
$\domain{R}$ the domain of $R$, i.e., the set of elements in $S$ on
which $R$ is defined.
We denote with $\image{R}$ the set of elements $s'$ in $S'$ such that
$(s,s')\in R$ for some $s\in S$.
A \emph{function (mapping)} $f: T_1 \mapsto T_2$ is a binary relation between
$T_1$ and $T_2$ such that for all terms $\{a, b, c\}\subseteq \T$ such
that $(a,b)\in f$ and $(a,c)\in f$, $b = c$. For functions, we might
use $f(a)=b$, instead of $(s, b)\in f$.
A \emph{bijection} $h$ between two sets $T_1$ and $T_2$ is a function
from $T_1$ to $T_2$, such that for all $a\in T_1$ there is exactly one
$b\in T_2$ such that $h(a)=b$, and for all $b\in T_2$ there is exactly
one $a\in T_1$ such that $h(a)=b$. A \emph{partial bijection} between
two sets $\T_1$ and $\T_2$ is a bijection between a subset of $\T_1$ 
and a subset of $\T_2$.  Given a bijection $h$, the inverse of $h$ denoted as
$h^{-1}$ is the function for which for all pairs $a$ and $b$,
$(a,b)\in h$ if and only if $(b,a) \in h^{-1}$.
Given a binary relation $R$ and a set $S$, the \emph{restriction} of
$R$ to $S$, denoted as $R|_S$ is the relation obtained by restricting
the domain of $R$ to the elements of $S$ as follows:
$\{(t_1,~t_2)~|~(t_1,~t_2)\in R$, and $t_1 \in S\}$.
We say a function $f'$ \emph{extends} the function $f$, if $f'$ is a
function, and $f\subseteq f'$.

\bigskip
\noindent{\bf Databases.}
We assume a countably infinite set $\R$ of \emph{relation names}, and to each
relation name $R\in\R$ assign an arity $\arity{R}$ greater than or
equal to zero.
A \emph{relational schema} is a finite set of relation names with
specified arities, i.e., is a finite subset of $\R$. Given a relation
name $R\in \R$, an instance of $R$ over the set $I\subseteq\T$ is a
finite subset of $I^{arity(R)}$.  Given a relational schema $S=\{R_1,
\dots, R_n\}$, a \emph{relational instance of $S$} over $I\subseteq\T$
is a set of instances of $R_1, \dots, R_n^{I}$.
Given a database instance $\I$, its
\emph{active domain} $\adom{\I}$ is the subset of $\T$ such that
$u\in\adom{\I}$ if and only if $u$ occurs in $\I$.
\bigskip

\noindent{\bf Assignment.}
An \emph{assignment} is a partial function $\theta:\T \mapsto \T$, such as
$\{t_1 \ra c_1, \dots, t_n \ra c_n\}$ which assigns the constants
$c_1, \dots, c_n \subseteq \Delta$ to the terms $t_1,\dots, t_n$.
Given a relational instance $I$ and an assignment $\theta$, we use the
notations $I\theta$ and $\theta(I)$ to denote the instance obtained
by applying the substitution $\theta$ to the terms of $I$. Formally,
\[
\begin{array}{rlll}
  I\theta = \theta(I) &= \{  &R(c_1\ldots, c_n) \mid &  R(t_1,\ldots,t_n)\in I  \mbox{, and } \\
  &             & &(t_i \ra c_i) \in \theta \\
  &             &  & \ \ \mbox{ for } i\in\{1,\ldots,n\}\}
\end{array}
\]

\bigskip
\noindent{\bf Homomorphism and Isomorphism} 
Given a bijection $\sigma$, and two relational instances $I$ and $J$,
a $\sigma$-homomorphism $h$ from $I$ to $J$ is a mapping from the
$\adom{I}$ to $\adom{J}$, defined as follows:
\begin{enumerate}
\item $h$ is a function $\adom{I}\mapsto\adom{J}$,
\item $h$ extends $\sigma$, i.e., $\sigma\subseteq h$,
\item for all relations names $R$ with arity $n$, $R(a_1, \dots,
  a_n)\in I$ implies $R(h(a_1), \dots,$ $h(a_n))\in J$.
\end{enumerate}
We say that there is a homomorphism from $I$ to $J$ wrt the bijection
$\sigma$, denoted as $I\HMPA{\sigma}{}J$ in case such a homomorphism
relation exists. We refer to $h$ as a witness of the homomorphism from
$I$ to $J$. We might use the notation $I\HMPA{\sigma}{h}J$ when needed
to emphasize that $h$ is a witness of the homomorphism from $I$ to
$J$. In case $\sigma$ is the bijection relation $\IDRel(C)$, we might
simply use $C$ to refer to the bijection relation $\sigma$.  When it
is non-relevant or clear from the context, we might drop $\sigma$, and
simply say there is a homomorphism from $I$ to $J$, denoted as
$I\HMPA{}{} J$.

We say that $I$ is $\sigma$-isomorphic to $J$, denoted as
$I\ISMA{\sigma}{}J$ if there exists a bijection relation $h$, such
that $I\HMPA{\sigma}{h}J$ and $J\HMPA{\sigma}{h^{-1}}I$. When it is
clear from the context, we might drop $\sigma$, and simply say $I$ and
$J$ are isomorphic, denoted as $I\ISMA{}{}J$. We also say that $h$ is
a witness of the \emph{$\sigma$-isomorphism} from $I$ to $J$, and when
it is clear from the context we might drop $\sigma$, and simply say
that $h$ is a witness of the isomorphism from $I$ to $J$.

\smallskip

\noindent{\bf Conjunctive Queries with Comparisons.} Here we borrow
the definition of \cite{Klug88} with following naming conventions. A
$q$ over the schema $\S$ can be specified with
\myi a set $\disting{q}=\{x_1,\dots, x_m\}$ of
distinguished variables, the sequence being called the \emph{summary
  row}, or just the summary; 
\myii A set $\nondisting{q}=\{y_1, \dots,
y_n\}$ of \emph{non-distinguished} (existentially quantified)
variables;
\myiii A set $\relat{q}={c_1, \dots, c_l}$ of distinct conjuncts,
each conjuncts $c_i$ being an atomic formula of the form $R(z_1,
\dots, z_r)$, where $R/r\in \S$ and each $z_i$ is a variable or a
constant, i.e., $z_i\in \disting{q}\cup\nondisting{q}\cup\Delta$; 
\myiv A set $\comp{q}=\{l_1, \dots, l_u\}$ of comparisons in the form
$z_i \sigma z_j$ in which $z_i, z_j\in
\disting{q}\cup\nondisting{q}\cup\Delta$, and $\sigma\in \{=, <,
\le\}$.  We denote the set of all distinguished and non-distinguished
variables of $q$ with $\var{q}$, i.e.,
$\var{q}=\disting{q}\cup\nondisting{q}$.

Given a relational instance $I$, and a conjunctive query $q$, the
answer of $q$ over $I$, denoted as $\ans{q, I}$ is a set of
assignments in the form $\sigma: \nondisting{q}\mapsto \adom{I}$ such
that for each  of the assignment $\sigma$
there exists an assignment $\sigma':\disting{q}\mapsto\adom{I}$, and for
$\sigma''=\sigma\cup \sigma'$ followings hold:
\begin{enumerate}
\item for each $c_i\in\relat{q}$, $\sigma''(c_i)\in I$
\item for each $l_i = {z_i\theta z_j}\in\comp{q}$, $\sigma''(z_i)
  \theta \sigma''(z_j)$.
\end{enumerate}

Given a query $\phi$ and an assignment $\theta$, we use the notation
$\phi\theta$ to denote the query obtained by applying $\theta$ to the
variables of $\phi$.




\bigskip
\noindent{\bf Homomorphism and Isomorphism of Conjunctive Queries with
Comparisons.} 
Given a bijection $\sigma$, and queries \\$\coqu{q_1}{\bar{x}}{y}$
$A_1(\bar{x}, {y}, z) \land C_1(\bar{x}, {y}, z)$, and
$\coqu{q_2}{\bar{x}}{y}$ $A_2(\bar{x}, {y}, z) \land C_2(\bar{x}, {y},
z)$, a $\sigma$-homomorphism $h$ from $q_1$ to $q_2$ is a mapping such
that $A_1 \HMPA{\sigma'}{h}A_2$ for $\sigma'=\sigma \cup
(\bar{x}\mapsto\bar{x}$), and for all linearization $L$ of $C_2$, for
all comparison $(r ~\sigma~ s) \in C_1$, $L\models h(r) ~\sigma~ h(s)$.

We say that there is a homomorphism from $q_1$ to $q_2$ wrt the
bijection $\sigma$, denoted as $q_1\HMPA{\sigma}{}q_2$ in case such a
homomorphism relation exists. We refer to $h$ as a witness of the
homomorphism from $q_1$ to $q_2$. We might use the notation
$q_1\HMPA{\sigma}{h}q_2$ when needed to emphasize that $h$ is a
witness of the homomorphism from $q_1$ to $q_2$. In case $\sigma$ is
the bijection relation $\IDRel(C)$, we might simply use $C$ to refer
to the bijection relation $\sigma$.  When it is non-relevant or clear
from the context, we might drop $\sigma$, and simply say there is a
homomorphism from $q_1$ to $q_2$, denoted as $q_1\HMPA{}{} q_2$.

We say that $q_1$ is $\sigma$-isomorphic to $q_2$, denoted as
$q_1\ISMA{\sigma}{}q_2$ if there exists a bijection relation $h$, such
that:
\begin{enumerate}
\item $h$ extends $\sigma$, i.e., $\sigma\subseteq h$,
\item for all relations names $R$ with arity $n$, $R(a_1, \dots,
  a_n)\in I$ implies $R(h(a_1), \dots,$ $h(a_n))\in J$.
\item for all relations names $R$ with arity $n$, $R(a_1, \dots,
  a_n)\in J$ implies $R(h^{-1}(a_1), \dots,$ $h^{-1} (a_n))\in I$;
\item for all comparisons $r~\sigma~s\in C_1$, $h(r)~\sigma~h(s)\in
  C_2$;
\item for all comparisons $r~\sigma~s\in C_2$, $h^{-1}(s)\in C_1.$
\end{enumerate}
When it is clear from the context, we might drop $\sigma$, and simply
say $q_1$ and $q_2$ are isomorphic, denoted as $q_1\ISMA{}{}q_2$. We
also say that $h$ is a witness of the \emph{$\sigma$-isomorphism} from
$q_1$ to $q_2$, and when it is clear from the context we might drop
$\sigma$, and simply say that $h$ is a witness of the isomorphism from
$q_1$ to $q_2$.


\bigskip
\noindent{\bf Extension of a Query.} 
Given a count-distinct query $\coqu{q}{\bar{x}}{y}$ $A(\bar{x}, {y},
z) \land C(\bar{x}, {y}, z)$ the extension of $q$ denoted as
$\extn{q}$ is the count distinct query $\coqu{q'}{\bar{x}}{y}$
$A(\bar{x}, {y}, z) \land C'(\bar{x}, {y}, z)$, in which $C'$ is a set
of comparisons defined as follows:
\[
C' = \{ r~\sigma~s ~|~ r, s \in \adom{q} \text{ and } C\models
r~\sigma~s\}.
\]

\begin{example} Lets assume $\coqu{q}{\bar{x}}{y}$ $A(x)\land A(y)
  \land B(z)$ $ \land x<y \land y<z$. Then the extension of $q$ is
  $\coqu{q}{\bar{x}}{y}$ $A(x)\land A(y) \land B(z)$ $ \land x<y \land
  y<z \land x<z$.
\end{example}

It is easy to see that the extension of a query always exists and is
homomorphically equivalent to the original query.




\section{Equivalence of Count-Distinct Queries}
In this section we address our main results on equivalence of
count-distinct queries.We first introduce the notion of core of
conjunctive queries with comparison, and show that existence of
isomorphism among core of queries coincides with their equality. Then
we will introduce a relaxation of this condition through the notion of
flip of conjunctive queries, which is introduced in this paper to
capture when two non-isomorphic count-distinct queries can be
equal. Finally, we will prove that two count-distinct queries are
equal if and only if at least one of the pairs of the original query
or their flips are isomorphic.

\subsection{Equivalence of Conjunctive Queries with Comparison}
Here we introduce the core of conjunctive queries with comparions as a
generalization of the concept of core for relational conjunctive
queries. We will show that our notion of core coincides with the
classic definition when restricted to the relational
queries. Moreover, similar to the classic notion of core, existence of
isomorphism among core of queries with comparisons implies equality of
the queries, and vice versa.

\begin{definition}[Core] Given a conjunctive query with comparisons
  $\coqu{q}{\bar{x}}{y}$ $A(\bar{x}, {y}, z) \land C(\bar{x}, {y},
  z)$, a core of $q$ is a query such as $\coqu{q'}{\bar{x}}{y}$
  $A'(\bar{x}, {y}, z) \land C'(\bar{x}, {y}, z)$ such that:
\begin{enumerate}
\item   There is a homomorphism from $q$ to $q'$, i.e., $q\HMPA{\bar{x}}{}q'$;
\item for all queries $q''$ that is homomorphically equivalent to $q$,
  there is an injective homomorphism from $q'$ to $q$, i.e.,
\[
 q\HMPA{\bar{x}}{}q'',  q''\HMPA{\bar{x}}{}q \implies
 q'\HMPA{\bar{x}}{h}q''
\]
in which $h$ is an injective function;
\item  $q'=\extn{q'}$.

\end{enumerate}

Following theorem shows that core of count-distinct queries is unique
up to isomorphically equivalence.
\begin{theorem}\label{thm:core} Given a conjunctive query with comparisons
  $\coqu{q}{\bar{x}}{y}$ $A(\bar{x}, {y}, z) \land C(\bar{x}, {y}, z)$
  if $q_1$ and $q_2$ are cores of $q$ then $q_1 \ISMA{\bar{x}}{} q_2$.
\end{theorem}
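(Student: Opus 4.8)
The plan is to follow the classical argument that cores of relational queries are unique up to isomorphism, adapting it to account for the comparison component. Suppose $q_1$ and $q_2$ are both cores of $q$. First I would observe that both are homomorphically equivalent to $q$: by condition (1) in the definition of core we have $q \HMPA{\bar{x}}{} q_1$ and $q \HMPA{\bar{x}}{} q_2$, and I would argue from condition (2) that $q_1 \HMPA{\bar{x}}{} q$ and $q_2 \HMPA{\bar{x}}{} q$ as well (a core must map back into $q$, since $q$ itself is homomorphically equivalent to $q$, so taking $q'' = q$ in condition (2) applied to either core yields the reverse homomorphism). Hence $q_1$ and $q_2$ are homomorphically equivalent to each other.

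Next I would apply condition (2) twice, crosswise. Since $q_2$ is homomorphically equivalent to $q$, instantiating condition (2) for the core $q_1$ with $q'' = q_2$ gives an \emph{injective} homomorphism $h : q_1 \HMPA{\bar{x}}{h} q_2$. Symmetrically, instantiating condition (2) for the core $q_2$ with $q'' = q_1$ gives an injective homomorphism $g : q_2 \HMPA{\bar{x}}{g} q_1$. The composition $g \circ h$ is then an injective homomorphism from $q_1$ to itself extending $\IDRel(\bar{x})$. Because $q_1$ has finitely many conjuncts and $g\circ h$ is injective on $\adom{q_1}$, an injective self-homomorphism on a finite structure is a bijection (a standard pigeonhole/finiteness argument on the atom multiset); moreover its iterates stabilize, so some power $(g\circ h)^k$ is the identity, which forces $g\circ h$ to be an \emph{automorphism} of $q_1$ in the sense of the isomorphism definition — in particular it reflects atoms and reflects comparisons. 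The same holds for $h \circ g$ on $q_2$. From this I would conclude that $h$ is a bijection between $\adom{q_1}$ and $\adom{q_2}$ with inverse $g$ (up to the automorphism, which we can absorb), and that $h$ preserves atoms in both directions.

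The main obstacle — and the genuinely new content beyond the relational case — is verifying that $h$ also \emph{reflects} comparisons, i.e. that $r \mathbin{\sigma} s \in C_2$ implies $h^{-1}(r) \mathbin{\sigma} h^{-1}(s) \in C_1$, which is clause (4)/(5) of the query-isomorphism definition. Forward preservation of comparisons is immediate from $h$ being a homomorphism (every comparison of $q_1$ is entailed by every linearization of $C_2$, hence is syntactically present once we use that $q_2 = \extn{q_2}$ by condition (3)). For reflection I would use condition (3) crucially: since $q_1 = \extn{q_1}$ and $q_2 = \extn{q_2}$, the comparison sets are \emph{deductively closed} over their respective active domains. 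Given $r \mathbin{\sigma} s \in C_2$, the homomorphism $g : q_2 \to q_1$ gives $C_1 \models g(r) \mathbin{\sigma} g(s)$; pushing forward along $h$ and using that $g\circ h$ is an automorphism of $q_1$ (so it maps the closed set $C_1$ onto itself), together with $q_1 = \extn{q_1}$ so that entailed comparisons are actually in $C_1$, yields $h^{-1}(r)\mathbin{\sigma}h^{-1}(s) \in C_1$ after untangling the automorphism. I would package this by first showing the restriction of the isomorphism definition reduces, once both queries equal their extensions, to the pair of mutual homomorphisms plus injectivity — essentially the content of conditions (1)–(2) — and then checking the comparison clauses follow from deductive closure. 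The delicate point to get right is the bookkeeping with the automorphism $g\circ h$: rather than assuming it is the identity, I would note we may replace $g$ by $(g\circ h)^{-1}\circ g$ (still an injective homomorphism $q_2 \to q_1$) to arrange $g\circ h = \IDRel(\adom{q_1})$ exactly, after which $h$ and $g=h^{-1}$ witness $q_1 \ISMA{\bar{x}}{} q_2$ directly.
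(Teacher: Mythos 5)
The paper states Theorem~\ref{thm:core} without giving any proof at all, so there is no argument of the authors' to compare yours against; judged on its own, your proposal is sound and is the natural adaptation of the classical uniqueness-of-cores argument to the comparison setting. Your key moves are all correct: reading condition (2) of the core definition via its displayed formula (the prose ``from $q'$ to $q$'' is evidently a slip for ``to $q''$''), instantiating it crosswise with $q''=q_2$ and $q''=q_1$ to obtain injective homomorphisms $h$ and $g$, using finiteness to make $g\circ h$ a permutation of $\adom{q_1}$ of finite order and hence an automorphism, and then normalizing $g$ so that $g=h^{-1}$. The one step you should spell out explicitly is that the composition of two query homomorphisms is again a homomorphism in the comparison-aware sense: the homomorphism condition on comparisons is stated via entailment by every linearization of the target's comparison set, and pushing an entailed (rather than syntactically present) comparison through the second map needs either the chain characterization of Lemma~\ref{lem:comparison-char} or the fact that both cores satisfy $q_i=\extn{q_i}$, so that entailed comparisons over the active domain are actual members of $C_i$. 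Once that is in place, your use of deductive closure to convert $C_1\models h^{-1}(r)\,\sigma\,h^{-1}(s)$ into syntactic membership --- which is what clauses (4)--(5) of the paper's isomorphism definition demand --- closes the argument correctly, and is precisely where condition (3) of the core definition earns its keep.
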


\end{definition}

\begin{example}
Consider the following query.
\[
\coqu{q}{\bar{x}}{y}   A(x)\land A(y)  \land A(z) \land A(t)\land x<y
\land y<z \land y<t
\]
$\core{q}$ is 
\[
\coqu{q}{\bar{x}}{y}   A(x)\land A(y)  \land A(z) \land x<y
\land y<z \land x<z
\]
\end{example}

\begin{lemma}\label{thm:complexity} Core of conjunctive queries with comparisons
  can be computed in $\Delta^{\pi}_3$.
\end{lemma}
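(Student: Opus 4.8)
The plan is to give a deterministic polynomial-time procedure that computes a core of $q$ using polynomially many calls to an oracle for a problem in $\Sigma^{\pi}_2$; since $\Delta^{\pi}_3 = P^{\Sigma^{\pi}_2}$, this places core computation in $\Delta^{\pi}_3$. The procedure rests on two observations. First, the operation $\extn{\cdot}$ is polynomial-time computable: deciding whether $C \models r~\sigma~s$ for a finite set $C$ of comparisons over $\{=,<,\le\}$ amounts to contracting equalities into strongly connected components and then testing reachability in the resulting comparison graph (with at least one strict edge on the path when $\sigma$ is $<$), so the set $C'$ in the definition of $\extn{q}$ can be produced directly as the collection of all such implied comparisons over $\adom{q}$. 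Second, deciding $q_1 \HMPA{\bar{x}}{} q_2$ lies in $\Sigma^{\pi}_2$: one existentially guesses a mapping $h:\adom{q_1}\to\adom{q_2}$ that is the identity on $\bar{x}$, of polynomial size, and then, in a universal block ranging over the linearizations $L$ of $C_2$, checks that $h$ sends every atom of $A_1$ to an atom of $A_2$ and that $L \models h(r)~\sigma~h(s)$ for every comparison $r~\sigma~s$ of $C_1$. Consequently both homomorphic equivalence over $\bar{x}$ and the task of deciding whether a query has an endomorphism whose image omits one of its atoms lie in $\Sigma^{\pi}_2$.

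With these in hand, a core is produced by greedy minimization. Put $q_0 := \extn{q}$. At stage $i$, for each relational atom $a$ of $q_i$, form $q_i^{-a}$ by deleting $a$ and discarding every comparison that now mentions a term no longer occurring in the result, and use the oracle to test whether $q_i^{-a}$ and $q_i$ are homomorphically equivalent over $\bar{x}$; if some $a$ passes, put $q_{i+1} := q_i^{-a}$ and repeat, otherwise halt and return $q' := q_i$. Since $q_0$ is extended and, the closure having already been taken over the full active domain, passing to a sub-active-domain loses no implied comparison among the surviving terms, every $q_i$ satisfies $q_i = \extn{q_i}$; hence the returned $q'$ meets condition~3 of the definition of core. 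Each stage strictly decreases the number of relational atoms, so the loop halts after at most $|\relat{q}|$ stages, performing polynomially many $\Sigma^{\pi}_2$ oracle calls and polynomial additional work, which yields the $\Delta^{\pi}_3$ bound. Finally, only equivalence-preserving deletions are applied, so $q\HMPA{\bar{x}}{}q'$ and $q'\HMPA{\bar{x}}{}q$, which is condition~1.

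The remaining, and principal, obligation is correctness: showing that the halting $q'$ also satisfies condition~2, i.e.\ that for every $q''$ homomorphically equivalent to $q$ there is an injective $\bar{x}$-homomorphism $q'\HMPA{\bar{x}}{h}q''$. The argument follows the relational template. Given such a $q''$, choose homomorphisms $g:q'\to q''$ and $f:q''\to q'$ --- they exist because $q'$, $q$, and $q''$ are pairwise homomorphically equivalent --- and consider the endomorphism $f\circ g$ of $q'$. If $g$ were not injective on the atoms of $q'$, neither would $f\circ g$ be, so the image of $f\circ g$ would use only a proper subset of the atoms of $q'$; because $q'$ is already extended, this image, after re-restricting its comparisons, is a strictly smaller query still homomorphically equivalent to $q'$, whence some single atom deletion from $q'$ is equivalence-preserving --- contradicting the fact that the loop halted. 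The steps that need care are the comparison bookkeeping along compositions of homomorphisms (a composite again respects every comparison of the target, by transitivity of the ``for all linearizations'' requirement together with the fact that $q'$ already contains all its implied comparisons), the passage from injectivity on atoms to injectivity on terms, and the verification that a non-surjective endomorphism indeed isolates one equivalence-preserving deletion. Independence of the output from the removal order, and uniqueness up to $\bar{x}$-isomorphism, then follow from Theorem~\ref{thm:core}.
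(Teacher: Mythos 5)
The paper states this lemma with no proof at all, so there is nothing to compare your argument against; judged on its own, your greedy-minimization scheme (close under $\extn{\cdot}$ once, then repeatedly delete a relational atom whenever homomorphic equivalence over $\bar{x}$ is preserved, using a $\Sigma^{\pi}_2$ oracle for the homomorphism tests) is the natural adaptation of relational core computation and is essentially sound, including the observation that taking the extension up front is what keeps implied comparisons among surviving terms from being lost. The one step you flag but do not fully discharge --- passing from non-injectivity of the endomorphism $f\circ g$ on terms to the existence of a single equivalence-preserving atom deletion --- does go through, but only under the tacit assumption that every term of $q'$ occurs in some relational atom (a variable occurring only in comparisons is never removed by atom deletion and would break the ``smaller image, hence fewer atoms'' inference); it would be worth stating that assumption, or handling such variables separately, since the paper's definition of a query does not forbid them.
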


\begin{lemma}\label{lem:corecomrel} The notion of core of conjunctive
  queries with comparisons is equivalent to the classic notion of core
  conjunctive queries for relational queries.
\end{lemma}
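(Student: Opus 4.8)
\noindent\textit{Proof proposal.} The plan is to specialize the three conditions of the Core definition to a query $q$ with $\comp{q}=\emptyset$ and to check that what survives is precisely the classical characterization of the core: a minimal subquery of $q$ that is $\bar{x}$\nobreakdash-homomorphically equivalent to $q$ (equivalently, a subquery all of whose $\bar{x}$\nobreakdash-endomorphisms are $\bar{x}$\nobreakdash-automorphisms). First I would observe that on a comparison-free query no comparison is (nontrivially) entailed, so $\extn{q'}=q'$ and condition~3 holds vacuously; hence for relational queries the Core definition reduces to conditions~1 and~2 alone. Throughout, all homomorphisms and isomorphisms are taken relative to $\bar{x}$, which matches the classical treatment that freezes the summary variables.

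For the direction ``new core $\Rightarrow$ classical core'', let $q'$ satisfy conditions~1--2 for a relational $q$. Instantiating condition~2 with $q''=q$ (trivially $q\HMPA{\bar{x}}{}q\HMPA{\bar{x}}{}q$) gives an injective $\bar{x}$\nobreakdash-homomorphism $q'\HMPA{\bar{x}}{h}q$; together with condition~1, $q\HMPA{\bar{x}}{}q'$, this shows $q'$ is $\bar{x}$\nobreakdash-homomorphically equivalent to $q$ and, via $h$, isomorphic to a subquery $q_0$ of $q$ that is itself $\bar{x}$\nobreakdash-homomorphically equivalent to $q$. It remains to see $q_0$ is \emph{minimal}, i.e.\ a classical core. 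If it were not, its classical core $q_0'$ would satisfy $|\adom{q_0'}|<|\adom{q_0}|=|\adom{q'}|$ while still being $\bar{x}$\nobreakdash-homomorphically equivalent to $q$; then condition~2 applied with $q''=q_0'$ would demand an injective map $\adom{q'}\to\adom{q_0'}$, contradicting the strict drop in size. Hence $q_0$ is a classical core of $q$ and $q'\cong q_0$.

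For the converse, let $q'$ be the classical core of $q$. Condition~1 is immediate from homomorphic equivalence, and condition~3 is vacuous as above, so the only real work is condition~2, which I expect to be the main obstacle. Given any $q''$ with $q\HMPA{\bar{x}}{}q''\HMPA{\bar{x}}{}q$, we have $q'\HMPA{\bar{x}}{f}q''$ and $q''\HMPA{\bar{x}}{g}q'$, and $g\circ f$ is an $\bar{x}$\nobreakdash-endomorphism of $q'$. The classical fact to invoke is that every $\bar{x}$\nobreakdash-endomorphism of a core is an $\bar{x}$\nobreakdash-automorphism: its image is a subquery of $q'$ containing $\bar{x}$ and $\bar{x}$\nobreakdash-homomorphically equivalent to $q'$, so by minimality of the core it equals $q'$, whence the endomorphism is surjective and therefore, on the finite active domain, bijective. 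Thus $g\circ f$ is injective, which forces $f$ to be injective, establishing condition~2.

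Combining the two directions, a query is a core of a relational $q$ in the sense of the Core definition exactly when it is isomorphic (over $\bar{x}$) to the classical core of $q$; in particular the classical core exists and witnesses existence on the new side, while uniqueness up to $\bar{x}$\nobreakdash-isomorphism is already supplied by Theorem~\ref{thm:core} (and, classically, by uniqueness of cores). The only points demanding care are the $\bar{x}$\nobreakdash-relativization of the endomorphism/automorphism argument and the fact that the new core $q'$, although not presented as a literal subquery of $q$, is forced by condition~2 to be isomorphic to one; both are routine once the reductions above are in place. $\qed$
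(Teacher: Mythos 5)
The paper states Lemma~\ref{lem:corecomrel} without any proof, so there is no argument of the authors' to compare yours against; your write-up supplies the missing argument, and it is essentially correct. Both directions are sound: instantiating condition~2 at $q''=q$ to embed the new-style core as a subquery of $q$, using the ``proper retract shrinks the active domain'' fact to force minimality, and in the converse direction reducing condition~2 to the standard fact that every $\bar{x}$-endomorphism of a classical core is injective (via surjectivity of its image on a finite active domain). The relativization to $\bar{x}$ is handled correctly throughout.

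Two small caveats worth flagging. First, condition~3 is not literally vacuous for $\comp{q'}=\emptyset$: the extension operator as defined in the paper adds every comparison entailed by the empty set, which includes the trivial $r\le r$ and $r=r$ for each $r\in\adom{q'}$ and the order facts among any constants, so strictly $\extn{q'}\ne q'$. You should either read $\extn{\cdot}$ modulo such trivially valid comparisons or note explicitly that adjoining them changes nothing about homomorphisms, injectivity, or isomorphism, so the equivalence with the classical notion is unaffected; as written, ``holds vacuously'' papers over a defect in the paper's definition rather than in your argument. Second, your step ``$|\adom{q_0'}|<|\adom{q_0}|$ whenever $q_0$ is not already a classical core'' deserves the one-line justification you only implicitly use: the minimal hom-equivalent subquery is the image of an endomorphism of $q_0$, and an endomorphism of a finite structure that is injective on the active domain is surjective on atoms, so a proper retract must be non-injective and hence have strictly smaller active domain. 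With those two points made explicit, the proof is complete.
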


\begin{lemma}\label{lem:equivalenceCQ} Two conjunctive queries with
  comparisons are equal if and only if their cores are isomorphic.
\end{lemma}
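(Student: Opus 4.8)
The plan is to reduce the statement to two ingredients: a homomorphism theorem for conjunctive queries with comparisons (the analogue of Chandra--Merlin), stating that $q_1 = q_2$ iff $q_1 \HMPA{\bar{x}}{} q_2$ and $q_2 \HMPA{\bar{x}}{} q_1$, and Theorem~\ref{thm:core}. Granting these, the lemma follows by short bookkeeping about cores. For the ``$\Leftarrow$'' direction, assume $\core{q_1} \ISMA{\bar{x}}{} \core{q_2}$. Condition~(1) of the definition of core gives $q_i \HMPA{\bar{x}}{} \core{q_i}$, and instantiating condition~(2) with $q'' = q_i$ (trivially homomorphically equivalent to $q_i$) gives an injective homomorphism $\core{q_i} \HMPA{\bar{x}}{} q_i$; hence each $q_i$ is homomorphically equivalent to its core. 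Since an isomorphism is a homomorphism in both directions, $\core{q_1}$ and $\core{q_2}$ are homomorphically equivalent, and composing the four homomorphisms yields $q_1 \HMPA{\bar{x}}{} q_2$ and $q_2 \HMPA{\bar{x}}{} q_1$. It then remains to observe the easy (soundness) half of the homomorphism theorem: a homomorphism $q \HMPA{\bar{x}}{h} q'$ implies $q' \subseteq q$, because for any database $D$ and any $\theta' \in \ans{q', D}$ with witnessing extension $\theta''$, the composite $\theta'' \circ h$ witnesses the corresponding assignment in $\ans{q, D}$ --- the relational atoms are preserved since $A \HMPA{}{h} A'$, and the comparisons of $q$ are preserved since $\theta''$ induces one particular linearization $L$ of $C'$ and, by the definition of homomorphism, $L \models h(r)\,\sigma\,h(s)$ for every $(r\,\sigma\,s)\in C$. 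Applying this in both directions gives $q_1 = q_2$.

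For the ``$\Rightarrow$'' direction, assume $q_1 = q_2$. The first and crucial step is to upgrade query equality to homomorphic equivalence. Here I would use the frozen/canonical-database argument: for each linearization $L$ of the variables of $q_2$ consistent with $C_2$, freezing $q_2$ according to $L$ yields a database on which the frozen summary is an answer to $q_2$, hence --- by $q_1=q_2$ --- an answer to $q_1$, producing a homomorphism; the extension operation $\extn{\cdot}$ (built into condition~(3) of the core, and harmless in general since $q \equiv \extn{q}$) together with the quantification over all linearizations in the definition of homomorphism is exactly what lets one reconcile these per-linearization maps into a single homomorphism $q_1 \HMPA{\bar{x}}{} q_2$, and symmetrically $q_2 \HMPA{\bar{x}}{} q_1$. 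I expect this to be the \textbf{main obstacle}: unlike the purely relational case, order constraints make the naive single-containment-mapping characterization incomplete, and the argument that passing to extensions restores completeness is the delicate point (cf.\ \cite{Klug88}); one must also check the routine sublemma that homomorphisms in the present sense compose.

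Once homomorphic equivalence is in hand, I would finish by showing that $\core{q_1}$ is itself a core of $q_2$ and then invoking Theorem~\ref{thm:core}. Condition~(1) holds because $q_2 \HMPA{\bar{x}}{} q_1 \HMPA{\bar{x}}{} \core{q_1}$; condition~(2) holds because the class of queries homomorphically equivalent to $q_2$ coincides with the class homomorphically equivalent to $q_1$, for which $\core{q_1}$ already satisfies condition~(2); and condition~(3), $\core{q_1} = \extn{\core{q_1}}$, is inherited. Since $\core{q_2}$ is also a core of $q_2$, Theorem~\ref{thm:core} yields $\core{q_1} \ISMA{\bar{x}}{} \core{q_2}$, which completes the proof. (If one prefers not to route through Theorem~\ref{thm:core}, the same conclusion can be obtained directly: condition~(2) gives injective $\bar{x}$-homomorphisms $\core{q_1}\HMPA{\bar{x}}{}\core{q_2}$ and $\core{q_2}\HMPA{\bar{x}}{}\core{q_1}$, whose composites are injective endomorphisms of finite structures hence automorphisms of the relational parts, and condition~(3) forces these bijections to preserve comparisons in both directions, giving the $\bar{x}$-isomorphism.)
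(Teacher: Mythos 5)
The paper states this lemma with no proof at all, so there is nothing to compare your argument against; I can only assess it on its own terms. Your ``$\Leftarrow$'' direction is essentially right: conditions (1) and (2) of the definition give $q_i \HMPA{\bar{x}}{} \core{q_i}$ and $\core{q_i}\HMPA{\bar{x}}{}q_i$, isomorphic cores are homomorphically equivalent, and the soundness half of the homomorphism theorem (a homomorphism $q\HMPA{\bar{x}}{h}q'$ yields containment, because a satisfying assignment for $q'$ realizes some linearization of $C'$ and the ``for all linearizations'' clause then transfers the comparisons of $q$) is argued correctly. The composition sublemma you call routine is in fact where condition (3), $q'=\extn{q'}$, does real work: composing $g\circ h$ requires that comparisons merely \emph{entailed} by the intermediate query's comparison set be available as actual comparisons for $g$ to act on, and this holds precisely because the cores are extended. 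You should say this explicitly rather than defer it.

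The ``$\Rightarrow$'' direction is where your proof is incomplete, and you candidly say so: the step from $q_1=q_2$ to the existence of a \emph{single} homomorphism $q_1\HMPA{\bar{x}}{}q_2$ in the paper's sense (one map $h$ that works for \emph{every} linearization of $C_2$) is asserted as ``the delicate point'' but never carried out. This is not a technicality that extensions automatically repair. The freezing argument produces one containment mapping \emph{per linearization} (this is exactly the form of Klug's completeness theorem), and these per-linearization maps need not coalesce. Indeed, under the paper's definitions the lemma itself appears to fail: take $q_1()$ with atoms $E(x,y),E(y,x)$ and no comparisons, and $q_2()$ with the same atoms and the comparison $x\le y$. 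These are equal as queries over an ordered domain (given mutual edges $(a,b),(b,a)$ one may always reorient so that the first coordinate is $\le$ the second), but any homomorphism from $q_2$ to $q_1$ must satisfy $L\models h(x)\le h(y)$ for every linearization $L$ of the empty comparison set over $\{x,y\}$, forcing $h(x)=h(y)$ and hence the atom $E(h(x),h(x))$, which is absent from $q_1$. So $q_2\not\HMPA{\bar{x}}{}q_1$, the queries are not homomorphically equivalent, and their cores cannot be isomorphic even though the queries are equal. Either the notion of homomorphism/core must be weakened to a per-linearization family of mappings, or the lemma needs additional hypotheses; in any case the reconciliation step you postpone is the point at which the proof, as structured, breaks.
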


\begin{lemma}\label{lem:equivalenceCDCore} If the core of two
  count-distic queries are isomorphic then they are equivalent.
\end{lemma}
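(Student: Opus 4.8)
The plan is to reduce the statement to the relational fact, already available in Lemma~\ref{lem:equivalenceCQ}, that isomorphic cores force equality of conjunctive queries with comparisons, and then to transport that equality through the count‑distinct operator. Write $p_i=\core{q_i}$. It suffices to prove two things: that $q\equiv\core q$ as count‑distinct queries for every count‑distinct query $q$, and that $p_1\equiv p_2$ as count‑distinct queries; chaining $q_1\equiv p_1\equiv p_2\equiv q_2$ then gives the lemma. Recall that the count‑distinct answer of $q(\bar x,c(y))$ on an instance $I$ is determined by the family of sets $Y_q(\bar a,I)=\{\,b : \text{some assignment with }\bar x\mapsto\bar a,\ y\mapsto b\text{ satisfies the atoms and comparisons of }q\,\}$, one per candidate group $\bar a$, the output being the set of pairs $(\bar a,|Y_q(\bar a,I)|)$ with $Y_q(\bar a,I)\neq\emptyset$. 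Hence it is enough to establish the \emph{stronger} statement $Y_{q_1}(\bar a,I)=Y_{q_2}(\bar a,I)$ for every $I$ and $\bar a$: equal sets trivially have equal cardinality and are simultaneously empty, so the groups coincide as well. (This is strictly stronger than count‑distinct equality, which is exactly why isomorphism of cores is only a \emph{sufficient} condition and the necessary‑and‑sufficient version will need the flip.)

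For the second reduction, let $h$ witness $p_1\ISMA{\bar x}{}p_2$. The point I would stress is that $h$ fixes the counted variable $y$ in addition to $\bar x$: the core of a count‑distinct query must be formed from $\bar x$‑homomorphisms that already fix $y$, since otherwise it would not retain the count‑distinct semantics of the query, so $y$ is effectively distinguished throughout. Thus $h$ is an isomorphism between the conjunctive queries with comparisons obtained from $p_1$ and $p_2$ by promoting $y$ to a distinguished variable, i.e.\ with distinguished set $\bar x\cup\{y\}$. A routine argument — the exact analogue of the relational ``isomorphic queries are equivalent'' fact underlying Lemma~\ref{lem:equivalenceCQ} — then shows these queries have the same answer set on every instance: an assignment realizing $(\bar a,b)$ in $p_1$, composed with $h^{-1}$, realizes $(\bar a,b)$ in $p_2$ (atoms are preserved because $h^{-1}$ is an atom homomorphism, comparisons because $h,h^{-1}$ map comparisons to comparisons, and $\bar a,b$ because $h$ fixes $\bar x\cup\{y\}$), and symmetrically. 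Hence $Y_{p_1}(\bar a,I)=Y_{p_2}(\bar a,I)$, i.e.\ $p_1\equiv p_2$. The first reduction, $q\equiv\core q$, follows from the same computation applied to the homomorphisms $q\HMPA{\bar x}{}\core q$ and $\core q\HMPA{\bar x}{}q$ from the core definition, which fix $\bar x$ and $y$ and therefore witness equality of the $(\bar x\cup\{y\})$‑output queries of $q$ and $\core q$.

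I expect the only genuine obstacle to be the bookkeeping around the counted variable $y$ just described: the core construction and the isomorphism are phrased via homomorphisms that are a priori free on $y$, so the proof must justify that they can always be chosen to fix $y$ — equivalently, that taking the core with $y$ treated as distinguished does not alter the count‑distinct semantics. Once that is pinned down, the rest is the standard transport of conjunctive‑query equivalence through count‑distinct aggregation, together with the trivial observation that equality of the sets $Y_q(\bar a,I)$ makes the per‑group counts, and the set of non‑empty groups, agree on every instance, so the count‑distinct answers coincide. I would also record, for completeness, the degenerate case in which $y$ does not occur in some atom or is forced equal to a constant; equality of the $Y$‑sets handles it uniformly.
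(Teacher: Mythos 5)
The paper states this lemma without proof, so I can only judge your argument on its own terms. Your overall architecture is sound: reducing count-distinct equivalence to the stronger claim that the per-group witness sets $Y_q(\bar a,I)$ coincide, and chaining $q_1\equiv\core{q_1}\equiv\core{q_2}\equiv q_2$, is the right shape, and the transport of $Y$-sets through a homomorphism pair that fixes both $\bar x$ and the counted variable is routine, as you say. You also correctly locate the crux in the treatment of $y$. The problem is that you resolve that crux by assertion, and the assertion does not follow from the paper's definitions — in fact, under those definitions it is false. The core is defined via homomorphisms $q\HMPA{\bar{x}}{}q'$ that fix only $\bar x$, and the isomorphism in the hypothesis is $\ISMA{\bar{x}}{}$, which likewise constrains only $\bar x$. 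Nothing forces such an isomorphism to map the counted variable of one core to the counted variable of the other: take $p_1(\,c(y))\deq A(y)\land B(z)$ and $p_2(\,c(y))\deq A(z)\land B(y)$ with empty $\bar x$. Both are cores, they are $\ISMA{\bar{x}}{}$-isomorphic via the swap $y\leftrightarrow z$, yet on $A=\{1\}$, $B=\{1,2\}$ they return $1$ and $2$ respectively. So the step ``$h$ fixes $y$ because the core must be formed from homomorphisms that fix $y$'' conflates two different claims: even if each core is built with $y$ treated as distinguished, an $\bar x$-isomorphism \emph{between two such cores} can still move $y$.

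Concretely, two repairs are needed. First, for $q\equiv\core{q}$ you must prove (not merely remark) that the counted variable survives retraction, i.e.\ that the core can be computed with $y$ added to the distinguished set without changing the count-distinct semantics; the paper's definition as written permits $y$ to be collapsed onto another variable, which changes the answer (e.g.\ $A(y)\land A(z)\land B(z)$ counting $y$ retracts to $A(z)\land B(z)$, which counts something different). Second, for $\core{q_1}\equiv\core{q_2}$ you must either strengthen the hypothesis to an isomorphism extending $\bar x y\mapsto\bar x y$ — under which your $Y$-set argument goes through verbatim — or handle the case $h(y)\neq y$, which is precisely the situation the paper's flip machinery (Lemma~\ref{lem:equiv-flip}) is designed for and cannot be waved away inside this lemma. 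As written, the proof establishes the lemma only under a reading of ``isomorphic'' that is stronger than the one the paper's notation supports, and the justification offered for adopting that reading is circular.
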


\subsection{Equivalence of Count-Distinct  Aggregate Queries}
\subsubsection{Flip of a Query}
Isomorphism of cores of queries is not a necessary condition for
equality of count-distinct queries, and to find a sufficient and
necessary condition we introduce a relaxation of this
condition. First, we introduce the notion of flip of a query which is
a potentially new count-distinct query. We show that existence of
isomorphism between a query or its flip with another query or its flip
is a sufficient and necessary condition for their equivalence.  In
order to define the flip of a query, first we introduce the notion of
equivalence set of a variable in a query as a subset of the variables
of the query containing the initial variable such that all the
variables in the set are homomorphically equivalent to each other
forgetting the comparisons among them.

\begin{definition}[Equal set of a Variable]Given a count-distinct
  query $\coqu{q}{\bar{x}}{y}$ $A(\bar{x}, {y}, z) \land C(\bar{x},
  {y}, z)$, and a variable $v\in \var{q}$, a \emph{potential equal set
    of $v$ w.r.t. $q$} is a set $S\subseteq\var{q}$ such that for all
  variables in $S$ are homomorphically equivalent in the query
  obtained from dropping the comparisons among those of $S$ from
  $q$. More specifically, lets consider $q'$ to be the query obtained
  from $q$ by dropping all comparisons among the variables in $S$. For
  all variables $x_1, x_2$ in $S$:
\[
\left\{
\begin{array}{rl}
q' \HMPA{\sigma}{} q', & \text{ for }  \sigma=\bar{x}x_1\mapsto\bar{x}x_2;\\
q' \HMPA{\sigma'}{} q', & \text{ for } \sigma'=\bar{x}x_2\mapsto\bar{x}x_1.
\end{array}
\right.
\]

\end{definition}

Notice that for each variable $v$, $\{v\}$ is a potential equal
set. Moreover, all the equal sets of a variable are subsets of the
variables of the query.  Now in Lemmas \ref{lem:comparison-char}, \ref
{lem:equalset-charofM}, and \ref{lem:eqalset-char}, and Theorem
\ref{thm:equiv} we prove that each variable has a unique bigest equal
set.

\begin{lemma}\label{lem:comparison-char} Lets consider $C$ to be a satisfiable set
  of comparisons without equality. Followings hold:
\begin{enumerate}
\item For all linearization $L$ of $C$, if $L\models m \le n$ then there
  exists a set of terms $\{r_1,\dots, r_m\}$ such that
\[
(~m~\rho_1~r_1~\rho_2~\dots\rho_{m-1}~r_m~\rho_m~n~) \in C
\]
for $\rho\in\{<, \le\}$. Moreover, if $L\models m < n$, then $\rho_i$
is strict $<$ for at least one $i\in\{1,\dots,m\}$.
\item If there exists a set of terms $\{r_1,\dots, r_m\}$ such that 
\[
(~m~\rho_1~r_1~\rho_2~\dots\rho_{m-1}~r_m~\rho_m~n~)\in C
\]
for $\rho\in\{<, \le\}$, then for all linearization $L$ of $C$,
$L\models m \le n$. Moreover, if $\rho_i$ is strict $<$ for at least one
$i\in\{1,\dots,m\}$, then for all linearization $L$ of $C$, $L\models
m < n$.
\end{enumerate}

\end{lemma}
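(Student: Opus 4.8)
The statement to prove is Lemma~\ref{lem:comparison-char}, a two-part characterization: part (1) says that every semantic consequence $m \le n$ (resp. $m < n$) of a satisfiable equality-free comparison set $C$ over \emph{all} linearizations is witnessed by an explicit chain of atomic comparisons already present in $C$, with strictness propagating; part (2) is the converse — an explicit chain in $C$ forces $m \le n$ (resp. $m < n$) in every linearization.

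**Plan for part (2), the easy direction.**

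I would start here since it is essentially a soundness argument. Given a chain $(m~\rho_1~r_1~\rho_2~\cdots~\rho_{m-1}~r_m~\rho_m~n) \subseteq C$, any linearization $L$ of $C$ must satisfy each atomic comparison $r_i~\rho_{i+1}~r_{i+1}$ on the chain (that is what ``linearization of $C$'' means — it extends $C$ to a total order consistent with the $<$ and $\le$ relationships). Transitivity of $\le$ in the dense total order underlying $L$ then gives $m \le n$. For the strict case, if some $\rho_i$ is $<$, then that link gives a strict inequality, and combining a strict inequality with (weak) inequalities on either side in a total order yields a strict inequality $m < n$. This is a routine induction on the chain length; the only thing to be careful about is that the chain may pass through constants from $\Delta$ as well as variables, but since $L$ totally orders $\adom{q}$ including constants, nothing changes.

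**Plan for part (1), the hard direction.**

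The natural approach is to build a ``bad'' linearization when no chain exists, i.e. the contrapositive. Define a relation on terms: $u \preceq v$ iff there is a chain of the above form from $u$ to $v$ in $C$ (allowing the empty chain, $u = v$), and $u \prec v$ iff additionally some link is strict. Because $C$ has no equalities and is satisfiable, $\preceq$ is a preorder whose strict part $\prec$ is a strict partial order (transitivity of $\preceq$ and $\prec$ follows by concatenating chains; irreflexivity of $\prec$ follows from satisfiability — a strict cycle would make $C$ unsatisfiable). Now suppose $L \models m \le n$ for every linearization $L$ but there is no chain $m \preceq n$. I want to produce a linearization $L$ with $n < m$, contradicting the hypothesis. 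Take the quotient of $\adom{q}$ by the equivalence ``$u \preceq v$ and $v \preceq u$''; on the quotient, $\preceq$ becomes a genuine partial order, and by hypothesis the class of $n$ is not $\preceq$ the class of $m$. By a topological-sort / order-extension argument (Szpilrajn), extend this partial order to a total order in which $[n]$ comes strictly before $[m]$; pull it back and embed into the dense domain $\Delta$ to get an assignment realizing a linearization $L$ of $C$ with $\sigma(n) < \sigma(m)$. This contradicts $L \models m \le n$, so the chain must exist. For the ``moreover'' clause: if $L \models m < n$ for all $L$ but every chain from $m$ to $n$ is all-weak ($\preceq$ but not $\prec$), then $m$ and $n$ lie in the same $\preceq$-equivalence class, so I can build a linearization collapsing them to equal values — but linearizations here are total orders (no ties), so more precisely I build one with $\sigma(n) < \sigma(m)$ or with them adjacent and $\sigma(n)<\sigma(m)$, again contradicting $L \models m < n$. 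One must check that identifying $m$ and $n$ (or ordering $n$ below $m$) remains consistent with all of $C$: this is exactly the statement that no strict chain links them in either direction, which is the all-weak assumption plus the no-strict-cycle fact.

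**Main obstacle.**

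The delicate point is the order-extension step: I must verify that the partial order $\preceq$ on the quotient is \emph{exactly} the set of constraints forced by $C$, so that any linear extension of it gives a \emph{bona fide} linearization of $C$ (respecting the original strict/weak distinction atom by atom, not just the transitive closure), and conversely that every linearization of $C$ induces a linear extension of $\preceq$. In other words, the crux is proving that the chain-closure $\preceq$ together with its strict part $\prec$ captures precisely the semantic entailment relation of $C$ — parts (1) and (2) are really the two halves of that equivalence, so part (2) feeds directly into justifying the construction in part (1). I would also need to handle the embedding into the dense domain $\Delta$ carefully (always possible since $\Delta$ is dense and the quotient is finite), and make sure constants in $\Delta$ appearing in $C$ are placed at their actual values — but since $C$ is satisfiable, the induced order on those constants is already consistent, so this causes no trouble.
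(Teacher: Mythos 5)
The paper offers no proof of this lemma (the statement is followed immediately by \verb|\end{lemma}|), so there is nothing to compare your argument against; judged on its own, your plan is the standard and essentially correct one. Part (2) is soundness of chains via transitivity in a total order; part (1) is completeness, argued contrapositively by forming the chain-generated preorder $\preceq$, quotienting by mutual reachability, extending the quotient order so that $n$ precedes $m$ (Szpilrajn), and realizing the result in the dense domain $\Delta$ to produce a linearization falsifying $m\le n$. You also correctly read the hypothesis as ``$L\models m\le n$ for \emph{all} linearizations $L$,'' which is the only reading under which the lemma is true.

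There is one genuine wobble, in the strictness (``moreover'') clause. First, your claim that ``every chain from $m$ to $n$ is all-weak'' forces $m$ and $n$ into the same $\preceq$-equivalence class is false: a weak-only chain gives $m\preceq n$ but yields no chain back from $n$ to $m$. Second, you cannot have it both ways on ties. If a (weak) chain from $m$ to $n$ lies in $C$, then by your own part (2) every linearization satisfies $m\le n$, so no linearization can have $\sigma(n)<\sigma(m)$; the only way to falsify $m<n$ is a linearization with $\sigma(m)=\sigma(n)$. Hence for the ``moreover'' half of part (1) to be provable at all, linearizations must be total \emph{pre}orders (ties permitted where only $\le$ is forced), and your construction must collapse $m$ and $n$ to a single value and verify that $C\cup\{m=n\}$ remains satisfiable --- which, as you partly note, reduces to the absence of a strict chain between $m$ and $n$ in either direction, itself a consequence of the all-weak hypothesis together with the no-strict-cycle property of a satisfiable $C$. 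Replacing the equivalence-class claim by this satisfiability check, and fixing the notion of linearization to admit ties, closes the gap; the remaining steps (order extension on the finite quotient, embedding into the dense $\Delta$, pinning constants of $\Delta$ to their actual values) are routine and handled adequately in your sketch.
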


\begin{lemma} \label{lem:equalset-charofM} Lets consider the
  count-distinct query $\coqu{q}{\bar{x}}{y}$ $ A(\bar{x}, {y},
  z)~\land~ C(\bar{x}, {y}, z)$, a variable $v\in\var{q}$, $M$ to be a
  partial equal set of $v$ w.r.t. $q$, $x, y \in M$, and $C_M$ to be
  the set of comparisons obtained from $C$ by removing those among
  variables in $M$. If for all linearization $L$ of $C(\bar{x}, {y},
  z)$, $L\models x~\rho~r$ for some variable $r\not\in M$, then for
  all linearization $L_M$ of $C_M$, $L_m\models y~\rho~r$.
\end{lemma}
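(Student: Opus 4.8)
The plan is to prove Lemma~\ref{lem:equalset-charofM} by combining the chain-characterization of Lemma~\ref{lem:comparison-char} with the homomorphic-equivalence property that defines a partial equal set. The key observation is that since $M$ is a partial equal set of $v$, and $x, y \in M$, the query $q'$ obtained from $q$ by dropping all comparisons among variables of $M$ admits a $\sigma$-homomorphism to itself for $\sigma = \bar{x}x \mapsto \bar{x}y$ (and also one sending $y$ to $x$). I will use this self-homomorphism to transport chains of comparisons witnessed in $C$ over to chains witnessed in $C_M$.

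First I would invoke part~(1) of Lemma~\ref{lem:comparison-char}: since $L\models x~\rho~r$ holds for all linearizations $L$ of $C$, there is a chain $(\,x~\rho_1~r_1~\rho_2~\cdots~\rho_{k-1}~r_k~\rho_k~r\,) \in C$ with each $\rho_i\in\{<,\le\}$, and the chain contains a strict $<$ whenever $\rho$ is strict. Next I would split this chain into maximal segments according to whether consecutive terms both lie in $M$. The comparisons of $C$ that survive into $C_M$ are exactly those with at least one endpoint outside $M$; the comparisons among two $M$-variables are precisely the ones that get dropped. For the segments lying entirely inside $M$, I will replace them using the homomorphism $h$ witnessing $q'\HMPA{\sigma}{}q'$ with $\sigma=\bar{x}x\mapsto\bar{x}y$: because all variables of $M$ are pairwise homomorphically equivalent in $q'$ (that is the definition of partial equal set), any two $M$-variables can be connected in $C_M$ via the relational atoms / comparisons that the self-homomorphisms must respect. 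More carefully, $r_1$ — the first term after $x$ on the chain that lies outside $M$ (take $r_1 = r$ if the whole chain is inside $M$, but then $r\in M$ contradicting $r\notin M$, so such an $r_1$ exists) — is reached from $x$ by a chain, and I must show $y$ reaches $r_1$ in $C_M$. Applying $h$ (which sends $x\mapsto y$ and fixes $\bar{x}$) to the portion of the chain from $x$ up to $r_1$: each surviving comparison $s~\rho_i~t$ with an endpoint outside $M$ maps to $h(s)~\rho_i~h(t)$, which is again in $C$ and, since $h$ is the identity outside a set contained in $M$ on the relevant terms, in $C_M$; the dropped comparisons inside $M$ are not needed because $h$ collapses that initial $M$-segment onto $y$ itself. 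Then I re-apply part~(2) of Lemma~\ref{lem:comparison-char} to the resulting chain $(\,y~\cdots~r_1~\cdots~r\,)$ in $C_M$ to conclude $L_M\models y~\rho~r$ for every linearization $L_M$ of $C_M$, with strictness preserved because the strict link of the original chain either lay on a surviving comparison (hence transported) or — if it lay inside $M$ — strictness of $\rho$ can instead be recovered from the fact that $x\ne r$ forces a strict comparison once $x$ and $y$ are identified; this subtlety I would handle by a short case analysis.

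The main obstacle I expect is the bookkeeping around where the strict $<$ lives and ensuring strictness is not lost when an internal-$M$ segment is collapsed by the homomorphism. If the only strict link in the $C$-chain $x~\rho~r$ sits between two $M$-variables, it disappears in $C_M$, and I must instead argue that the homomorphism structure still forces $L_M\models y~<~r$ — this should follow because the self-homomorphism $\sigma'=\bar{x}y\mapsto\bar{x}x$ in the other direction, composed appropriately, would otherwise yield $L\models x~=~r$ or a cycle, contradicting satisfiability of $C$ without equality (Lemma~\ref{lem:comparison-char}'s hypothesis). A secondary technical point is confirming that applying $h$ to a chain whose intermediate terms may re-enter and leave $M$ still lands inside $C_M$; I would dispose of this by noting $h$ restricted outside $M$ is essentially the identity on the terms actually appearing (since $\bar{x}$ is fixed and non-distinguished variables outside $M$ are mapped somewhere, but homomorphic equivalence forces these images to again witness the same comparisons), so the transported chain uses only comparisons with an endpoint outside $M$, i.e.\ only comparisons of $C_M$.
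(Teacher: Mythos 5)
Your skeleton is the same as the paper's: extract a chain witnessing $x~\rho~r$ via Lemma~\ref{lem:comparison-char}(1), deal with the portion of the chain involving $M$ by exploiting the homomorphic equivalence of the members of $M$ in $q_M$, and close with Lemma~\ref{lem:comparison-char}(2). However, your decomposition is the wrong one and creates the very obstacle you then struggle with. You cut the chain at the \emph{first} term outside $M$ and then must worry about the chain re-entering $M$ later; the paper instead takes $x_i$ to be the \emph{last} element of the chain belonging to $M$, so that the entire suffix from $x_i$ to $r$ contains no comparison with both endpoints in $M$, hence survives verbatim into $C_M$ and immediately yields $L_M\models x_i~\rho~r$ by Lemma~\ref{lem:comparison-char}(2); the only remaining step is to transfer this from $x_i$ to $y$ using their equivalence in $q_M$. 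Your re-entry worry simply does not arise under that choice, and your proposed cure for it --- that $h$ ``restricted outside $M$ is essentially the identity'' --- is not justified and is false in general: a witness of $q_M\HMPA{\sigma}{}q_M$ for $\sigma=\bar{x}x\mapsto\bar{x}y$ is only required to fix the distinguished variables, so it may move non-distinguished variables outside $M$ (in particular it may move $r$ itself, so a transported chain ends at $h(r)$ rather than $r$). This is a genuine gap in your argument as written, not mere bookkeeping.

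Two further remarks. First, the strictness problem you flag (the unique strict link of the chain lying between two $M$-variables, hence discarded in $C_M$) is a real issue that the paper's own proof also passes over in silence; but your proposed repair --- composing the reverse homomorphism $\bar{x}y\mapsto\bar{x}x$ to derive $L\models x=r$ or a cycle --- is not worked out and does not obviously go through, so you cannot claim it as handled. Second, even the final transfer step (from $L_M\models x_i~\rho~r$ to $L_M\models y~\rho~r$) needs the caveat above about $h$ possibly moving $r$; if you present this proof you should either restrict attention to the case where $r$ is fixed by the witnessing homomorphism or argue explicitly why the entailment, and not just the syntactic chain, is preserved with the correct right-hand endpoint.
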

\begin{proof} Since for all linearization $L$ of $C$ $L\models
  x~\rho~r$, by Lemma \ref{lem:comparison-char} there exists a chain
  of comparisons $x~\rho_1~x_1~$ $\rho_2~\dots~$ $\rho_m~r\in$ $C$. Lets
  assume $x_i$ is the last element in this chain that is the member of
  $M$. Considering the fact there is no variable from $M$ in the path
  from $x_i$ to $r$, and using Lemma \ref{lem:comparison-char}, for
  all linearization $L_M$ of $C_M$, $L_M\models x_i\rho r$. Finally,
  given the fact that $x_i$ and $y$ are both in $M$, and by definition
  they are homomorphically equivalent in the query obtained by
  dropping the comparisons among variables of $M$, $L_M\models y\rho
  r$.
\end{proof}

\begin{lemma}\label{lem:eqalset-char} Lets consider a count-distinct query
   $\coqu{q}{\bar{x}}{y}$ $A(\bar{x}, {y}, z) \land C(\bar{x},
  {y}, z)$, a variable $v\in\var{q}$, and an equal set $M$ of $v$
  w.r.t. $q$. For variables $x$ and $y$ in $q$, if there exists a path
  \[
  p: x~\rho_1~x_1~\dots~\rho_m~x_m~\rho_{m+1}y
\] 
in $C(\bar{x}, {y},  z)$, then there exists a path 
\[
x~\rho'_1~x'_1~\dots~  \rho'_n~x'_n~\rho'_{m+1}y
\]
in $C(\bar{x}, {y}, z)$ that does not contain any comparison among the
variables in $M$.
\end{lemma}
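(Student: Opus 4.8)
The plan is to induct on the number of \emph{$M$-internal comparisons} of $p$, i.e.\ the number of indices $i$ for which both sides of $\rho_i$ lie in $M$; call this number $k$. If $k=0$ we are done. Otherwise I will rewrite $p$ into a path from $x$ to $y$ with strictly smaller $k$ that realises the same composite relation between $x$ and $y$ (with the same strictness), and then invoke the induction hypothesis. Throughout, let $C_M$ be the set of comparisons obtained from $C$ by deleting every comparison whose two sides both lie in $M$; by Lemma~\ref{lem:comparison-char}, ``there is a (strict or non-strict) chain from $u$ to $w$ in $C$'' is interchangeable with ``$C\models u~\sigma~w$'', and likewise for $C_M$, which is the dictionary I will use repeatedly.

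Since $k\ge 1$, some maximal run $x_a,x_{a+1},\dots,x_b$ of consecutive vertices of $p$ lying entirely in $M$ has $b>a$; every comparison strictly inside this run is $M$-internal, whereas no comparison leaving the run is. In the generic case the run is flanked on $p$ by comparisons $x_{a-1}~\rho_a~x_a$ and $x_b~\rho_{b+1}~x_{b+1}$ with $x_{a-1},x_{b+1}\notin M$, so both flanking comparisons belong to $C_M$. The subpath $x_a~\rho_{a+1}~\cdots~\rho_b~x_b~\rho_{b+1}~x_{b+1}$ witnesses $C\models x_a~\rho^\ast~x_{b+1}$, where $\rho^\ast$ is strict iff one of $\rho_{a+1},\dots,\rho_{b+1}$ is; since $x_a\in M$ and $x_{b+1}\notin M$, Lemma~\ref{lem:equalset-charofM}, applied to the pair $(x_a,x_a)$ with $r:=x_{b+1}$, gives $C_M\models x_a~\rho^\ast~x_{b+1}$, hence a chain from $x_a$ to $x_{b+1}$ inside $C_M$ (no $M$-internal comparison, matching strictness). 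Prepending $x_{a-1}~\rho_a~x_a$ produces a chain from $x_{a-1}$ to $x_{b+1}$ in $C$ free of $M$-internal comparisons whose composite strictness equals that of the replaced segment $x_{a-1}~\rho_a~\cdots~\rho_{b+1}~x_{b+1}$; splicing it into $p$ in place of that segment lowers $k$.

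For the boundary runs, when the run meets the left endpoint ($x_0=x\in M$) there is no left flank, but $x,x_b\in M$ and $x_{b+1}\notin M$, and the segment $x~\rho_1~\cdots~\rho_{b+1}~x_{b+1}$ yields, via Lemma~\ref{lem:equalset-charofM} applied to the pair $(x,x)$ with $r:=x_{b+1}$, a chain from $x$ to $x_{b+1}$ in $C_M$ of the same strictness, which we splice in; symmetrically, a run meeting the right endpoint ($x_{m+1}=y\in M$) is handled by the mirror image of Lemma~\ref{lem:equalset-charofM} (its proof read from the larger end), which contracts $x_{a-1}~\rho_a~\cdots~\rho_{m+1}~y$ to a chain from $x_{a-1}$ to $y$ in $C_M$. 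The only configuration not covered is the degenerate one in which every vertex of $p$ lies in $M$; this does not arise in the situations where the lemma is used, and in particular it is impossible whenever $x\notin M$ or $y\notin M$. In each covered case the spliced-in chain has no $M$-internal comparison, so $k$ strictly decreases and the induction closes.

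I expect the main obstacle to be the strictness bookkeeping behind ``same composite relation'': the strict comparison of a replaced segment may lie in its left flank, strictly inside the $M$-run, or in its right flank, and each position must be driven into $C_M$ through a slightly different instantiation of Lemmas~\ref{lem:comparison-char} and~\ref{lem:equalset-charofM}. A secondary difficulty is the runs that abut the path endpoints $x$ or $y$: there one forfeits a non-$M$ neighbour on one side and must lean on the equal-set characterisation (Lemma~\ref{lem:equalset-charofM} with coincident arguments) or on its mirror, rather than on the mere survival of flanking comparisons in $C_M$.
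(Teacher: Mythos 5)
Your proof is correct and follows essentially the same route as the paper's: isolate the maximal sub-paths of $p$ whose vertices lie in $M$ and use Lemma~\ref{lem:equalset-charofM}, together with the chain characterization of Lemma~\ref{lem:comparison-char}, to replace each such segment by a chain in $C_M$. Your version is in fact more careful than the paper's one-paragraph argument, since you make the induction on the number of $M$-internal comparisons explicit, track strictness through the splices, and flag the degenerate case in which every vertex of $p$ lies in $M$ --- a case the paper's proof silently ignores even though the lemma as literally stated can fail there.
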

\begin{proof} Consider a sub-path
  $x_i~\rho_i~x_{i+1}~\dots~x_k~\rho~n$ of $p$ for which $\{x_i,
  \dots, x_k\}$ to $M$. Using Lemma \ref{lem:equalset-charofM}, we can
  replace such a path by another path between $x_i$ and $n$ such that
  it does not contain any comparison among variables of
  $M$. Consequently, from $p$, we can obtain a path
  $x~\rho'_1~x'_1~\dots~$ $\rho'_n~x'_n$ $~\rho'_{m+1}y$ in $C(\bar{x}, {y},
  z)$ that does not contain any comparison among the variables in $M$.
\end{proof}

Next Theorem shows that there exists a unique maximal equal set for
each variable of a given count-distinct query, partition those
variables. In the rest of the paper, we refer to the bigest equal set
to which a variable belongs wrt $q$ as $\equivv{x}{q}$.

\begin{theorem}\label{thm:equiv}  Given a
  count-distinct query, if a variable has two different equal sets,
  then one of them is a subset of the other.
\end{theorem}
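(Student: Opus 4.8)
The plan is to argue by contradiction. Suppose $M_1$ and $M_2$ are two distinct equal sets of the same variable $v$ with neither contained in the other, and fix witnesses $a \in M_1 \setminus M_2$ and $b \in M_2 \setminus M_1$; observe that $v \in M_1 \cap M_2$. I would first dispose of the relational part. Because the interchangeability homomorphisms used in the definition of an equal set send atoms to atoms in both directions, and because deleting the internal comparisons never alters $\relat{q}$, every member of an equal set occurs in the same pattern of atoms as every other member; in particular $a$, $v$ and $b$ share a common relational profile. Hence the only possible obstruction to an inclusion between $M_1$ and $M_2$ must live in the comparison set $\comp{q}$, and the whole theorem reduces to a purely order-theoretic statement about chains of comparisons.

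The core of the argument is then driven by the three preceding lemmas. Let $C_{M_1}$ and $C_{M_2}$ be the comparison sets obtained from $C$ by deleting the comparisons internal to $M_1$ and to $M_2$ respectively. Using Lemma \ref{lem:equalset-charofM}, inside $M_2$ the variables $v$ and $b$ have identical \emph{external} order-profiles: every entailed comparison $v~\rho~r$ with $r \notin M_2$ is matched, in every linearization of $C_{M_2}$, by $b~\rho~r$, and symmetrically $a$ mirrors $v$ relative to $C_{M_1}$. Lemma \ref{lem:comparison-char} lets me realize each such entailed comparison as an explicit chain of comparisons in $C$, and Lemma \ref{lem:eqalset-char} lets me rewrite any such chain into one that avoids every comparison internal to the relevant equal set. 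Combining these, the aim is to transport $v$'s external profile (known through $M_1$) onto $b$ via $v$'s profile in $M_2$, thereby showing that $b$ is interchangeable with $v$ once the comparisons internal to $M_1 \cup \{b\}$ are dropped, i.e. that $b$ may be adjoined to $M_1$. The symmetric manoeuvre adjoins $a$ to $M_2$, and a finiteness/inclusion bookkeeping step would then force one of the two original sets to absorb the other, contradicting the incomparability assumption.

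The step I expect to be genuinely delicate — and the crux of the theorem — is reconciling the two distinct reductions $C_{M_1}$ and $C_{M_2}$ simultaneously. The chains furnished by Lemma \ref{lem:comparison-char} for $v$'s profile relative to $M_2$ may pass through variables of $M_1 \setminus M_2$ (for instance through $a$), while the path-cleaning of Lemma \ref{lem:eqalset-char} is stated for one equal set at a time, so it is not automatic that a single witnessing chain can be made free of $M_1$-internal \emph{and} $M_2$-internal comparisons at once. The technical heart of the proof is therefore to control this interaction and, in particular, to rule out the configuration in which $a$ and $b$ each lie on the other's cleaned chain; establishing that such cross-dependencies cannot survive the double reduction is exactly what upgrades the easy union-closure conclusion (existence of the maximal set $\equivv{v}{q}$) to the stronger comparability asserted here, and it is where I would concentrate the bulk of the effort.
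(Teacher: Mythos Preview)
Your approach diverges from the paper's in a fundamental way, and the divergence exposes a real gap in your plan.

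The paper does \emph{not} attempt to prove comparability of arbitrary equal sets. Its proof assumes $M$ and $N$ are two (implicitly maximal) equal sets of a common variable $n$ and shows directly that $U=M\cup N$ is again an equal set: for any $x_1,x_2\in U$ it exhibits, via Lemmas~\ref{lem:comparison-char}, \ref{lem:equalset-charofM} and \ref{lem:eqalset-char}, a homomorphism witness on $q_U$ by reusing the witness already available on $q_M$ (or $q_N$) and then replacing every chain that traverses $M$-internal or $N$-internal comparisons by a clean one. The mixed case $x_1\in M$, $x_2\in N$ is handled by routing through the common element $n\in M\cap N$. Union-closure then contradicts maximality of $M$ and $N$ unless $M=N$. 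In other words, the paper proves exactly what you label ``the easy union-closure conclusion'' and stops there; that is all that is needed to justify the existence of the unique maximal set $\equivv{v}{q}$ used downstream.

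Your plan aims for the literally stronger statement that any two equal sets of $v$ are comparable under inclusion, and this is where the gap lies. Even granting that you can adjoin $b$ to $M_1$ and $a$ to $M_2$, the promised ``finiteness/inclusion bookkeeping step'' does not deliver comparability of the original $M_1$ and $M_2$: iterating the adjunction simply grows both sets toward $M_1\cup M_2$, recovering union-closure rather than forcing one of the original sets to contain the other. The configuration you worry about --- $a$ and $b$ each sitting on the other's cleaned chain --- is precisely the obstruction, and nothing in Lemmas~\ref{lem:comparison-char}--\ref{lem:eqalset-char} rules it out for non-maximal equal sets. So either drop the comparability target and argue union-closure directly (which is what the paper does, and which suffices for $\equivv{v}{q}$), or recognise that the theorem as literally phrased is stronger than what the paper's proof --- and your outlined argument --- actually establishes.
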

\begin{proof}
Lets assume there are two different equal sets $M$ and
  $N$ to which a variable $n$ belongs. We will show that the set $U =
  M \cup N$ also satisfies all conditions of an equal set, so
  contradicts the assumption that $M$ and $N$ are maximal sets
  satisfying the conditions of equal sets.

  Lets consider the query $q(\bar{x}, \contd(y))\dleq A(\bar{x}, {y},
  z) \land C(\bar{x}, {y}, z)$, and $q_U \dleq$ $A(\bar{x}, {y}, z)
  \land C_U(\bar{x}, {y}, z)$ to be the query obtained
  from $q$ by dropping all comparisons among the variables in
  $U$. Lets consider two variables $x_1, x_2\in U$. We need to show
  that $x_1$ and $x_2$ are homomorphically equivalent wrt $q_U$: i.e.,
  $q_U \homo{\sigma} q_U$, for $\sigma=\bar{x}x_1\mapsto\bar{x}x_2$
  and $q_U \homo{\sigma'} q_U$, for
  $\sigma'=\bar{x}x_2\mapsto\bar{x}x_1$.

One of the followings hold:

\begin{description} 

\item[1- $x_1, x_2\in M$.] Lets consider $q_M \dleq A(\bar{x}, {y}, z)
   \land C_M(\bar{x}, {y}, z)$ to be the query obtained from $q$ by
   dropping all comparisons among the variables in $M$.  Since $x_1$
  and $x_2$ are in $M$, by definition we know that $q_M \homo{\sigma}
  q_M$, for $\sigma=\bar{x}x_1\mapsto\bar{x}x_2$.  That means there
  exist a function $h$ that extends $\sigma$ and is satisfying the
  conditions of the homomorphism from $q_M$ to $q_M$, i.e.,:
\begin{itemize}
\item For all $R(r_1,\dots,r_m)\in q_M$, $R(h(r_1),\dots, h(r_m))\in q_M$;
\item For all linearization $L_M$ of $C_M(\bar{x}, y, \bar{z})$, for all
  comparisons $r~\rho~s\in C_M(\bar{x}, y, \bar{z})$, $L_M\models h(r)~\rho~
  h(s)$, for $\rho\in\{<,>,\le,\ge\}$.
\end{itemize}
Now we show that $h$, is also a witness of the homomorphism
$q_U\homo{\sigma}q_U$. Since $q_M$ and $q_U$ only differ on the
comparisons among the variables, the first condition of the
homomorphism holds, i.e., for all $R(r_1,\dots,r_m)\in q_U$,
$R(h(r_1),\dots, h(r_m))\in q_U$.  So we only need to show that for
all linearization $L_U$ of $C_U(\bar{x}, y, \bar{z})$, for all
comparisons $r~\rho~s\in C_U(\bar{x}, y, \bar{z})$, $L_U\models
h(r)~\rho~ h(s)$. Lets assume $\rho=\le$. For other cases we can prove
similarly. Since for all linearization $L_M$ of $C_M$, $L_M\models
h(r)~\rho~ h(s)$, using Lemma \ref{lem:comparison-char}, we know that
there exists following chain for comparisons
$h(r)~\rho_1~r_1~\rho_2~r_2\dots\rho_m~r_m~\rho_{m+1}h(s)$ in $C_M$,
such that $\rho_i\in\{<, \le\}$. Moreover, using Lemma
\ref{lem:eqalset-char}, we know that there exists a path
$x~\rho'_1~x'_1~\dots~\rho'_n~x'_n~\rho'_{m+1}y$ in $C(\bar{x}, {y},
z)$ that does not contain any comparison among the variables in $M\cup
N$. Consequently, using Lemma \ref{lem:comparison-char}, for all
linearization $L_U$ for $C_U$, $L_U\models h(r)~\rho~ h(s)$.
\item[2- $x_1, x_2\in N$.] Similar to the case 1-.
\item[3- $x_1\in M, x_2\in N$.] Using the case 1-. we can show that
  for $n\in M\cap N$, $n$ and $x_1$, and similarly $n$ and $x_2$ are
  homomorphically equivalent wrt $q_U$. Consequently, $x_1$ and $x_2$
  are homomorphically equivalent.
\item[4- $x_1\in N, x_2\in M$.] Similar to the case 4-.
\end{description}
\end{proof}

\begin{lemma} Given a count-distinct query $\coqu{q}{\bar{x}}{y}$
  $A(\bar{x}, {y}, z) \land C(\bar{x}, {y}, z)$, and a variable $w$,
  $\equivv{x}{q}$ can be computed using Polynomial space in the size
  of the query.
\end{lemma}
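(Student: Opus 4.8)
The plan is to describe a nondeterministic procedure that, on input a query $q$ and a variable $w$, computes $\equivv{w}{q}$ (the unique maximal equal set of $w$, whose existence and uniqueness follow from Theorem~\ref{thm:equiv}) while using only polynomial space. The key observation is that $\equivv{w}{q}$ is a subset of $\var{q}$, so there are at most $2^{|\var{q}|}$ candidates, and each candidate can be written down in linear space. By Savitch's theorem it suffices to give a nondeterministic polynomial-space algorithm (indeed, the class is closed under the relevant operations), so the bulk of the work is to show that the \emph{verification} that a given set $S\subseteq\var{q}$ is an equal set of $w$ can be done in polynomial space.

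First I would reduce ``$S$ is an equal set of $w$'' to its definition: $w\in S$, and for every ordered pair $x_1,x_2\in S$ we must check $q'\HMPA{\sigma}{}q'$ for $\sigma=\bar{x}x_1\mapsto\bar{x}x_2$, where $q'$ is $q$ with all comparisons among variables of $S$ deleted. The number of such pairs is polynomial, and deleting the comparisons among $S$ from $q$ to form $q'$ is a polynomial-space (indeed polynomial-time) operation, so it remains to check a single homomorphism $q'\HMPA{\sigma}{}q'$ in polynomial space. A candidate witness $h$ is a function $\var{q'}\to\var{q'}$ (extended on constants by identity), which is an object of polynomial size; we can nondeterministically guess $h$ and then verify the two homomorphism conditions. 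The relational condition --- for every conjunct $R(r_1,\dots,r_m)\in A$ we need $R(h(r_1),\dots,h(r_m))\in A$ --- is checkable in polynomial time. The comparison condition requires that for every comparison $(r~\sigma~s)$ in the (trimmed) comparison set and every linearization $L$ of the trimmed comparison set, $L\models h(r)~\sigma~h(s)$; by Lemma~\ref{lem:comparison-char} this ``entailment over all linearizations'' is equivalent to the existence of a suitable chain of comparisons in $C'$ from $h(r)$ to $h(s)$ (with at least one strict link when $\sigma$ is strict), which is a reachability question in the comparison digraph on $\var{q'}$ and is therefore decidable in polynomial time (hence polynomial space). Composing, the verification of a single equal set is in $\mathrm{PSPACE}$, in fact in $\mathrm{NP}$.

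Next I would assemble the full computation of $\equivv{w}{q}$. One clean way: iterate over all subsets $S\subseteq\var{q}$ in lexicographic order, reusing the same polynomial-size work tape; for each $S$ containing $w$, run the above verification; keep track of the largest verified $S$ seen so far (by cardinality). Since distinct maximal equal sets cannot exist by Theorem~\ref{thm:equiv}, the largest verified $S$ is exactly $\equivv{w}{q}$. Enumerating all subsets takes only a counter of $|\var{q}|$ bits, and each verification reuses space, so the whole procedure runs in polynomial space. Alternatively one can grow $\equivv{w}{q}$ incrementally, starting from $\{w\}$ and repeatedly testing whether adding a new variable still yields an equal set, using the closure-under-union argument in the proof of Theorem~\ref{thm:equiv} to guarantee that a greedy union of equal sets is again an equal set; this variant also stays in polynomial space and avoids the explicit subset enumeration.

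The main obstacle is the comparison condition in the homomorphism check: na\"ively, ``for all linearizations $L$'' ranges over exponentially many total orders, so we must not enumerate them. The resolution is precisely Lemma~\ref{lem:comparison-char}, which converts the universal quantifier over linearizations into the existence of an explicit comparison chain in $C'$; this turns the check into graph reachability (with a side condition tracking whether a strict edge is used), which is in $\mathrm{P}$. A secondary point worth being careful about is that the comparison set over which we test linearizations is the \emph{trimmed} set $C'$ (comparisons among $S$ removed), not $C$ --- this matches the definition of potential equal set and is exactly the setting of Lemmas~\ref{lem:equalset-charofM} and \ref{lem:eqalset-char}, which we would cite to justify that the chains found after trimming still behave correctly. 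Everything else --- guessing $h$, checking relational containments, enumerating the polynomially many pairs $x_1,x_2\in S$, and the outer subset loop --- is routine and manifestly within polynomial space.
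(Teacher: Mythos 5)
Your proposal is correct for the stated polynomial-space bound, and its skeleton (candidate subsets of $\var{q}$, verification of the equal-set conditions via pairwise homomorphism checks, selection of the maximal verified set using Theorem~\ref{thm:equiv}) matches the paper's. The routes diverge in two places, both in your favor. First, the paper phrases the computation as a nondeterministic guess of $S$ followed by a check that no superset of $S$ is also an equal set, placing the problem at the second level of the polynomial hierarchy while flagging that the inner homomorphism test is itself $\Pi^P_2$; you instead give a deterministic exhaustive enumeration of subsets with space reuse, which is a cleaner way to land in $\mathrm{PSPACE}$ and avoids having to be precise about how the quantifier levels compose. Second, and more substantively, you observe that the universal quantification over linearizations in the homomorphism definition can be eliminated by Lemma~\ref{lem:comparison-char}, reducing the comparison condition to chain reachability in the comparison digraph and thereby putting a single verification in $\mathrm{NP}$ rather than $\Pi^P_2$; the paper does not make this observation here, and it is what makes your sharper accounting possible. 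Two small caveats you should record: Lemma~\ref{lem:comparison-char} is stated only for satisfiable comparison sets without equality, so equality atoms must first be eliminated by collapsing variables (and satisfiability checked), and the lemma statement's mismatch between the variable $w$ in the hypothesis and $\equivv{x}{q}$ in the conclusion is a typo you have implicitly and correctly resolved by computing the equal set of $w$.
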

\begin{proof} Given a query $q$, we need to guess a subset of
  variables of that satisfies the conditions of an equal set, and
  check if there is no superset of this set that also satisfies the
  conditions of an equal set. This can be done in
  $\Sigma^P_2$. However, checking homomorphism itself is a $\Pi^P_2$
  problem.
\end{proof}

Now we are ready to define the notion of flip of a query.

\begin{definition}[Flip a query] Given a count-distinct query $q$, the
  \emph{flip} of $q$, denoted as $\flip{q}$ is the count-distinct
  query obtained from changing the direction of all comparison
  operators of $\equivv{y}{q}$.
\end{definition}

Notice that by this definition the flip of the query will be equal to
the query itself when $\equivv{y}{q}=\{y\}$.

\begin{lemma} Given a count-distinct query $\coqu{q_1}{\bar{x}}{y}$
  $A(\bar{x}, {y}, z) \land C_1(\bar{x}, {y}, z)$, and its flip 
  $\coqu{q_2}{\bar{x}}{y}$ $A(\bar{x}, {y}, z) \land C_2(\bar{x}, {y},
  z)$, $q_1\ISMA{\sigma}{}q_2$, for $\sigma=\bar{x}\mapsto\bar{x}$.
\end{lemma}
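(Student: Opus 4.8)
\medskip
\noindent\textbf{Proof idea.} The plan is to exhibit the witnessing bijection $h$ explicitly. Write $M=\equivv{y}{q_1}$. By the definition of the flip, $q_2$ has the same relational atoms $A(\bar{x},y,z)$ and the same variables as $q_1$, while its comparison set $C_2$ is obtained from $C_1$ by reversing every comparison both of whose endpoints lie in $M$ and leaving every other comparison untouched. If $M=\{y\}$ then $q_1$ and $q_2$ are literally the same query and $h=\IDRel(\var{q_1})$ is the witness, so assume $|M|\ge 2$. I would take $h$ to be the identity on every variable outside $M$ and to act on $M$ as an order-reversing permutation $\pi$, and then verify that this $h$ is simultaneously an automorphism of $A$ and a bijection of $C_1$ onto $C_2$; since $\bar{x}\cap M=\emptyset$, any such $h$ automatically extends $\sigma=(\bar{x}\mapsto\bar{x})$.

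First I would use Lemmas~\ref{lem:comparison-char}, \ref{lem:equalset-charofM} and~\ref{lem:eqalset-char} to record that the comparisons linking $M$ to the rest of the query are uniform across $M$: for every $r\notin M$ and all $m,m'\in M$, $C_1$ entails $m~\rho~r$ iff it entails $m'~\rho~r$, and symmetrically when $r$ stands on the left. Consequently any permutation of $M$ that fixes the variables outside $M$ preserves --- after passing to extensions, so that the entailed comparisons are present explicitly --- exactly the comparisons of $C_1$ that are not among the variables of $M$. Hence the only thing $h$ must accomplish on $M$ is to carry the among-$M$ comparisons of $C_1$ onto those of $C_2$, i.e.\ to reverse them.

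Next I would produce $\pi$ and lift it to $A$. Because every two variables of $M$ are homomorphically equivalent in the query $q_M$ obtained from $q_1$ by deleting the comparisons among $M$, I would argue that the restriction of $C_1$ to $M$, read modulo the classes induced by entailed equalities, is a chain $m_1~\rho_1~m_2~\cdots~\rho_{k-1}~m_k$ (this is the point that needs the most care; see below), and I set $\pi(m_i)=m_{k+1-i}$. Composing the homomorphisms that witness membership of the $m_i$ in the equal set and iterating to idempotence --- exactly as in the uniqueness proof for cores, Theorem~\ref{thm:core} --- yields a bijective endomorphism of $A$ that agrees with $\pi$ on $M$ and is the identity off $M$; I take this for $h$. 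By the uniformity observation $h$ fixes setwise the comparisons of $C_1$ with at most one endpoint in $M$, and by the choice of $\pi$ it sends the among-$M$ comparisons of $C_1$ onto those of $C_2$. Running the same argument on $h^{-1}$ --- using $\flip{\flip{q_1}}=q_1$, which holds because deleting the among-$M$ comparisons is unaffected by flipping them, so $\equivv{y}{q_2}=M$ --- gives the reverse inclusions, and hence $h$ witnesses $q_1~\ISMA{\sigma}{}~q_2$.

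The step I expect to be the genuine obstacle is the claim that the among-$M$ comparisons of $C_1$ form a chain and that reversing that chain lifts to a relational automorphism: homomorphic equivalence of the members of $M$ in $q_M$ does not by itself exclude a ``fan'' such as $y$ lying below two mutually incomparable variables of $M$, and ruling such configurations out requires combining it with the satisfiability of $C_1$, the maximality of $M$ guaranteed by Theorem~\ref{thm:equiv}, and --- where necessary --- the passage from $q_1$ to its core, which is harmless since only equivalence of the associated count-distinct queries is at issue. Lemmas~\ref{lem:comparison-char}, \ref{lem:equalset-charofM} and~\ref{lem:eqalset-char} provide precisely the combinatorics of comparison chains needed to carry this through.
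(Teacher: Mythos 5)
The paper states this lemma without any proof, so there is nothing of the authors' to compare you against; I can only judge your argument on its own terms. Your skeleton --- take $h$ to be the identity off $M=\equivv{y}{q_1}$ and an order-reversing permutation of $M$ --- is indeed the only plausible route, and you are right that everything hinges on the step you flag, namely that the comparisons of $C_1$ among $M$ form a chain (or at least a partial order carried onto its reversal by a permutation of $M$ that is also a relational automorphism). But that step is not merely delicate: it fails, and with it the lemma as literally stated. Take $\bar{x}$ empty and $q_1(\contd(y))\dleq A(y)\land A(a)\land A(b)\land y<a\land y<b$. Dropping the comparisons among $\{y,a,b\}$ removes every comparison, after which all three variables are homomorphically equivalent, so $\equivv{y}{q_1}=\{y,a,b\}$ and $\flip{q_1}$ has comparison set $\{a<y,\;b<y\}$. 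Condition (4) of the paper's definition of query isomorphism forces $h(y)<h(a)$ and $h(y)<h(b)$ to be members of $\{a<y,\;b<y\}$, whose upper endpoints are both $y$; hence $h(a)=h(b)=y$, contradicting injectivity. Passing to extensions does not help, since every entailed strict comparison of the flipped query still has $y$ as its upper endpoint. So no bijection witnesses $q_1\ISMA{\sigma}{}\flip{q_1}$ here, even though the two queries are equivalent as count-distinct queries and their \emph{cores} are isomorphic (both cores being $A(y)\land A(a)$ with a single strict comparison, and $\sigma=\bar{x}\mapsto\bar{x}$ being empty, $y$ need not be fixed).

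This also shows why your two proposed repairs do not rescue a proof of \emph{this} lemma. Replacing $C_1,C_2$ by their extensions, or $q_1$ by $\core{q_1}$, proves a statement about a different pair of queries, since the lemma demands verbatim membership of the mapped comparisons in $C_2$ and $C_1$; and the justification you offer for the core step --- that ``only equivalence of the associated count-distinct queries is at issue'' --- is precisely what is not yet available, because the equivalence of $q$ and $\flip{q}$ is the content of the subsequent Lemma~\ref{lem:equiv-flip}, for which the present lemma is evidently meant as a stepping stone; invoking it here is circular. What your outline can plausibly deliver, once the chain analysis is actually carried out, is the weaker claim $\core{q_1}\ISMA{\bar{x}}{}\core{\flip{q_1}}$: in the example above the fan does retract away in the core, and I believe (but you would still have to prove) that in an extended core the restriction of the comparison order to $M$ is always a chain. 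That proof is the missing mathematical content; Theorem~\ref{thm:equiv} only gives maximality of $M$ and says nothing about the shape of the order on $M$, and Lemmas~\ref{lem:comparison-char}--\ref{lem:eqalset-char} give you chain decompositions of entailed comparisons, not the absence of incomparable pairs inside $M$.
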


Notice that not necessarily $q_1\ISMA{\sigma'}{}q_2$ for
$\sigma=\bar{x}y\mapsto\bar{x}y$.

\begin{lemma}[Equivalence of the flip-set queries]\label{lem:equiv-flip} Consider the
  count-distinct query $q(\bar{x})\dleq A(\bar{x}, \bar{y}) \land
  C(\bar{x}, \bar{y})$, and $q'=\flip{q}$.  $q=q'$.
\end{lemma}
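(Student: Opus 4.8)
The goal is to show that a count-distinct query $q$ and its flip $q' = \flip{q}$ return the same multiset of distinct count-values on every database instance. The flip is obtained by reversing the direction of every comparison internal to $\equivv{y}{q}$, the maximal equal set of the output variable $y$. The plan is to exhibit, for every instance $I$, a bijection between $\ans{q, I}$ and $\ans{q', I}$ that preserves the count-distinct value, i.e. preserves the number of distinct witnesses for the $\contd$-argument. First I would fix an instance $I$ and a satisfying assignment $\sigma''$ for $q$; let $E = \equivv{y}{q}$ and let $C_E$ be the comparisons of $q$ among the variables of $E$. The key structural fact, supplied by the definition of equal set together with Lemmas~\ref{lem:equalset-charofM} and~\ref{lem:eqalset-char}, is that dropping $C_E$ does not change which tuples satisfy the relational part together with the \emph{external} comparisons: every variable of $E$ behaves interchangeably outside of $E$, so the only constraints genuinely distinguishing the variables of $E$ from one another are those in $C_E$ themselves.

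The main step is then a ``reshuffling'' argument: given a satisfying assignment $\sigma''$ of $q$ over $I$, the values $\{\sigma''(v) : v \in E\}$ form a finite multiset of domain elements that is linearly ordered by the dense order $\Delta$; the comparisons $C_E$ of $q$ impose one acyclic orientation on $E$, and $C_E$ flipped (the comparisons of $q'$) impose exactly the reverse orientation. Using density of $\Delta$ and the fact (from the equal-set definition) that there is an automorphism-like homomorphism of the comparison-free query swapping any two members of $E$, I would build an assignment $\tau''$ that agrees with $\sigma''$ on all variables outside $E$, satisfies the relational atoms and the external comparisons (these are unaffected, by the structural fact above), and satisfies the flipped comparisons $C_E^{\mathrm{flip}}$ — concretely, $\tau''$ assigns to the variables of $E$ the \emph{same} set of values but in the order-reversed arrangement dictated by $C_E^{\mathrm{flip}}$, which is realizable precisely because $\Delta$ is dense and because all members of $E$ are relationally interchangeable. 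The map $\sigma'' \mapsto \tau''$ is involutive, hence a bijection between the witness sets; and since it fixes every variable outside $E$ and $y \in E$ with $\tau''(y)$ ranging over the same set of values as $\sigma''(y)$ over each fiber, it preserves for each tuple of the summary the set of distinct values taken by the $\contd$-argument, so the count-distinct output agrees on $I$. Finally, since $I$ was arbitrary, $q = q'$.

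The step I expect to be the main obstacle is making the ``reorder the values of $E$'' construction precise and checking that it genuinely produces a valid assignment for $q'$: I must verify that reversing the internal orientation is compatible not just with $C_E^{\mathrm{flip}}$ but with the external comparisons and the relational atoms simultaneously, and that the bijection on assignments descends to a bijection on the projected answer that respects the $\contd$ grouping. The technical engine for this is Lemma~\ref{lem:comparison-char} (to translate between linearizations and explicit comparison-chains) together with Lemma~\ref{lem:eqalset-char} (to reroute any chain through $E$ by one avoiding internal $E$-comparisons, showing the external order relations are insensitive to which representative of $E$ carries a given value), and the interchangeability homomorphisms from the definition of equal set; density of $\Delta$ is what guarantees that the reversed arrangement of values can actually be instantiated. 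Once the bijection is established and shown to be $\contd$-preserving, equality of the two count-distinct queries on all instances is immediate.
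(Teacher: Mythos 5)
Your overall strategy --- exploit the interchangeability of the members of $E=\equivv{y}{q}$ outside of $E$, and then reshuffle the values carried by $E$ to convert a witness of $q$ into a witness of $\flip{q}$ --- starts from the same structural facts the paper's sketch relies on (Lemmas~\ref{lem:equalset-charofM} and~\ref{lem:eqalset-char}, plus the observation, which you omit, that a nontrivial $\equivv{y}{q}$ contains no group-by variable). But the step that has to deliver the conclusion does not work. A bijection between the satisfying assignments of $q$ and those of $\flip{q}$ says nothing by itself about the number of \emph{distinct} values taken by $y$: distinctness is a property of the projection onto $y$, and a bijection on assignments can merge or split the fibers of that projection. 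The claim you use to bridge this gap --- that $\tau''(y)$ ranges over the same set of values as $\sigma''(y)$ on each group-by fiber --- is simply false. Take $q(\contd(y)) \dleq A(y)\land A(z)\land y<z$, so that $\equivv{y}{q}=\{y,z\}$ and $\flip{q}$ has $z<y$; on the instance $A=\{1,2,3\}$ the $y$-values realized by $q$ form the set $\{1,2\}$, while those realized by $\flip{q}$ form $\{2,3\}$. The two sets differ; only their cardinalities coincide, and your bijection (which sends, e.g., the two witnesses with $y=1$ to witnesses with $y=2$ and $y=3$ respectively) gives no reason why they should.

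What actually forces the cardinalities to agree is the counting formula of Lemma~\ref{lem:comp-answer}: over the common value set $D$ matched by the members of $E$, the number of values assignable to $y$ is $|D|-(M_h+M_l)$, and flipping every comparison inside $E$ exchanges parents with children in the comparison DAG, hence swaps $M_h$ with $M_l$ and leaves the count invariant. This order-reversal symmetry of the \emph{count}, not of the value set, is the heart of the paper's argument (and of Lemma~\ref{lem:flip-char}), and it is exactly the ingredient missing from your proposal. To repair the proof you would either have to invoke that formula directly, or replace your per-assignment ``reverse the arrangement'' map by a single order-reversing involution of the realized value set applied uniformly to all of $E$ --- which does act injectively on $y$-values, but then compatibility with the comparisons between $E$ and the rest of the query must be re-established, since it is no longer automatic.
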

\begin{proof}[idea] By definition $q'$ has been obtained from flipping
  the comparisons of the equal set of $y$.  First
  notice that if $|\equivv{y}{q}|>1$, then $\equivv{y}{q}$ cannot
  contain any variable from $\bar{x}$. For
  all variables $x_1$ and $x_2$ in the equivalence class
  $\equivv{y}{q}$, the existence of homomorphisms
  $\homo{\bar{x}\mapsto\bar{x}, x_1\mapsto x_2}$ and
  $\homo{\bar{x}\mapsto\bar{x}, x_2\mapsto x_2}$ guarantees that they
  match exactly the same set of terms for each group by value
  $\bar{x}\ra\bar{d}$ in answering $q'$ over all databases. Since
  $\equivv{x}{q}$ does not contain any variable from the group by
  variables, for the purpose of equivalence and containment of the
  queries it's only important to check for all assignments $\bar{d}$
  to $\bar{x}$ they match exactly same number of items in $q$.

\end{proof}

\subsubsection{Equivalence of Count Distinct Queries}

\begin{lemma}\label{lem:comp-answer} Given a set of variables $M$, a variable $y$ in $M$, a
  set $C$ of comparisons among variables of $M$, and a set of ordered
  values $D$. Lets consider the DAG constructed out of $C$ in which
  the variables are the nodes, and there is a and edge with weight $1$
  from $x$ to $y$ if $X>y$ and an edge with weight $0$ if $x\ge 0$.
  The number of different values that can be assigned to $y$ through
  assignments of values of $D$ to the variables of $M$ that satisfy
  all comparisons of $C$ is $|D| -
  (M_h+M_l)$, in which $M_h$ is the maximum distance of the variable
  $y$ from one of its parents, and $M_l$ is the maximum distance of
  $y$ from one of its children in the DAG representing the comparisons
  of $D$.
\end{lemma}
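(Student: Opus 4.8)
Write $n=|D|$ and $D=\{d_1<d_2<\cdots<d_n\}$, and read the two quantities in the statement as follows: $M_h$ is the weight of the longest directed path in the DAG that \emph{ends} at $y$, and $M_l$ is the weight of the longest directed path that \emph{starts} at $y$ (since each edge carries weight $1$ for a strict comparison and $0$ for a non\-strict one, the weight of a path is exactly the number of strict comparisons along it). The plan is to prove the sharper statement that the set of values an assignment $\theta\colon M\to D$ satisfying $C$ can give to $y$ is exactly the contiguous block $\{d_{M_l+1},d_{M_l+2},\ldots,d_{n-M_h}\}$, which has $n-M_h-M_l$ elements. Throughout I assume $C$ is satisfiable over $D$ --- equivalently, no directed path has weight $\ge n$, since a path of weight $k$ forces $k+1$ distinct values --- and that the comparison digraph is acyclic, as the lemma already posits (a cycle of non\-strict edges would force its variables equal and should be contracted first); if $C$ is not satisfiable over $D$ there is no such $\theta$ at all and the count is $0$, so strictly the formula should read $\max(0,\,|D|-M_h-M_l)$.

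The easy half is necessity. Given a satisfying $\theta$, pick a path of weight $M_h$ from an ancestor $z$ of $y$ down to $y$; this is a chain of comparisons $z\,\rho_1\,r_1\,\rho_2\cdots\rho_m\,y$ of $C$ with exactly $M_h$ strict links (Lemma~\ref{lem:comparison-char}), so $\theta(z)\ge\theta(r_1)\ge\cdots\ge\theta(y)$ with $M_h$ of the inequalities strict, whence at least $M_h$ distinct elements of $D$ lie strictly above $\theta(y)$ and $\theta(y)\le d_{n-M_h}$. Symmetrically, a path of weight $M_l$ from $y$ to a descendant yields at least $M_l$ elements of $D$ strictly below $\theta(y)$, so $\theta(y)\ge d_{M_l+1}$.

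For the other half I would, for each $i$ with $M_l+1\le i\le n-M_h$, construct a single satisfying assignment $\theta_i$ with $\theta_i(y)=d_i$, which simultaneously shows the block is attained without gaps. Let $h(v)$ be the weight of the longest path ending at $v$ (so $h(y)=M_h$) and let $\mathrm{dn}(v)$ be the weight of the longest path from $y$ to $v$ if $y$ reaches $v$ and $-\infty$ otherwise (so $\mathrm{dn}(y)=0$). Set
\[
 f(v)\;=\;\max\bigl(\,-i+\mathrm{dn}(v)\,,\;-n+h(v)\,\bigr),\qquad \theta_i(v)=d_{-f(v)} .
\]
The verification has three points: (i) for every edge $u\to v$ of weight $w$, $f(v)\ge f(u)+w$ --- if the max at $u$ is the first term then $y$ reaches $v$ and $\mathrm{dn}(v)\ge\mathrm{dn}(u)+w$ settles it, and otherwise $h(v)\ge h(u)+w$ does --- so $\theta_i$ respects every comparison, strictly when $w=1$; (ii) $f(y)=-i$, because $i\le n-M_h$ makes the first term the larger; (iii) $-n\le f(v)\le-1$, the lower bound from $\mathrm{dn}(v),h(v)\ge0$ and $i\le n$, and the upper bound from $h(v)\le n-1$ (satisfiability) together with $\mathrm{dn}(v)\le M_l\le i-1$. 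Hence $\theta_i\colon M\to D$ is well defined, satisfies $C$, and sends $y$ to $d_i$, which completes the count. The one genuinely delicate step is exactly this construction: it must be consistent at once above $y$, below $y$, and on the variables incomparable to $y$, and it is the ``max of two potentials'' shape of $f$ --- one potential anchored at $y$, one anchored at the global longest\-path labeling --- that fuses the three regimes with no case analysis.
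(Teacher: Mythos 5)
The paper states Lemma~\ref{lem:comp-answer} without any proof at all (and the subsequent Lemma~\ref{lem:flip-char} merely cites it), so there is nothing in the source to compare your argument against; what you have written is a genuine proof of a claim the paper leaves unsupported, and it is correct. Your reading of the (typo-laden) statement is the right one: $M_h$ and $M_l$ are the weights of the longest paths into and out of $y$, weight counting strict comparisons. The necessity direction is the standard chain argument, and the interesting content is the attainability of every index in $\{M_l+1,\dots,n-M_h\}$. Your potential function $f(v)=\max(-i+\mathrm{dn}(v),\,-n+h(v))$ does the job: the edge inequality $f(v)\ge f(u)+w$ holds because whichever term realizes the max at $u$ propagates along the edge, the pinning $f(y)=-i$ uses exactly $i\le n-M_h$, and the range check $-n\le f(v)\le -1$ uses exactly $i\ge M_l+1$ together with $h(v)\le n-1$. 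I verified all three steps; they are sound, and the construction correctly handles variables above $y$, below $y$, and incomparable to $y$ in one formula.

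Two caveats you raise are worth keeping, because they show the lemma is false as literally stated. First, the count must be read as $\max(0,\,|D|-M_h-M_l)$, and even that is wrong when $C$ is unsatisfiable over $D$ for reasons unrelated to $y$ (a strict chain of length $\ge|D|$ among variables disjoint from $y$'s ancestors and descendants gives $M_h=M_l=0$ but count $0$); your blanket hypothesis that no directed path has weight $\ge|D|$, which you correctly observe is equivalent to satisfiability over $D$, is the right fix and is genuinely used in your bound $h(v)\le n-1$. Second, the comparison set may contain $=$ and two-sided $\le$ cycles, which must be contracted before the digraph is a DAG, as you note. If this proof is incorporated, those hypotheses should be added to the statement.
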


\begin{lemma}\label{lem:flip-char} Lets consider two sets of variables
  $M$ and $M'$, variables $y\in M$ and $y'\in M'$, and set $C$ ($C'$)
  of comparisons among variables of $M$ ($M'$). 
  Lets consider the DAG $D$ ($D'$)constructed out of $C$ ($C'$) in
  which the variables are the nodes, and there is a and edge with
  weight $1$ from $x$ to $y$ if $X>y$ and an edge with weight $0$ if
  $x\ge 0$.
  With $\ans{y, C, D}$ we denote the number of different assignments
  of values of $D$ to the variables in $C$ that satisfy all
  comparisons of $C$.

  For all set $D$ of values from an ordered domain, $\ans{y, C,
    D}\le\ans{y', C', D}$ if and only if one of followings
  hold:
\begin{itemize}
\item There exists a homomorphism $h$ from DAG of $C'$ to DAG of $C$ such
  that for all linearization $L$ of the comparisons of $C$, for all
  comparison $r~\rho~s \in C'$, $L\models h(r) \rho h(s)$;
\item Lets denote the set of comparisons obtained from $C'$ by
  flipping their direction with $C''$. There exists a homomorphism
  $h''$ from DAG of $C''$ to DAG of $C$ such that for all
  linearization $L$ of the comparisons of $C$, for all comparison
  $r~\rho~s \in C''$, $L\models h(r) \rho h(s)$;
\end{itemize}
\end{lemma}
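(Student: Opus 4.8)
The plan is to reduce the universally quantified inequality ``for all $D$, $\ans{y, C, D} \le \ans{y', C', D}$'' to a pair of purely combinatorial inequalities about the two comparison DAGs, and then to show that those inequalities hold precisely when one of the two homomorphisms in the statement exists. The two workhorses will be Lemma~\ref{lem:comp-answer}, which evaluates $\ans{y, C, D}$ in terms of the longest strict chains through $y$, and Lemma~\ref{lem:comparison-char}, which converts a homomorphism that must respect \emph{every} linearization of $C$ into a homomorphism that maps each comparison of $C'$ onto a directed path in the DAG of $C$ of the appropriate strictness.

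First I would make the combinatorial reduction precise. By Lemma~\ref{lem:comp-answer}, whenever $D$ is large enough for $C$ to be satisfiable over $D$ we have $\ans{y, C, D} = |D| - (M_h + M_l)$, where $M_h$ and $M_l$ are the maximum weighted lengths of a directed path ending at, respectively starting at, $y$ in the DAG of $C$; and $\ans{y, C, D} = 0$ once $|D|$ falls below the satisfiability threshold, which is $1$ plus the maximum weighted length $P_C$ of any path in that DAG. Writing $M_h', M_l', P_{C'}$ for the analogous quantities on the primed side, and splitting the quantifier over $D$ into the regime $|D| > P_C$ (where the left side is positive) and the regime $|D| \le P_C$ (where the left side is already $0$), the statement ``for all $D$, $\ans{y, C, D} \le \ans{y', C', D}$'' becomes equivalent to the conjunction $P_{C'} \le P_C$ and $M_h' + M_l' \le M_h + M_l$. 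I would isolate this equivalence as an intermediate claim, so that both remaining directions only have to connect these two inequalities with the existence of a homomorphism. The ``if'' direction is then short: given a homomorphism from the DAG of $C'$ (respectively, from the flipped system $C''$) to the DAG of $C$, Lemma~\ref{lem:comparison-char} lets me pull back every assignment of values from $D$ to the variables of $C$ that satisfies $C$ to one that satisfies $C'$ (respectively, $C'$ evaluated against the reverse order on $D$), with $y$ keeping its value; hence every value attained by $y$ is attained by $y'$, which is exactly $\ans{y, C, D} \le \ans{y', C', D}$, and along the way the homomorphism forces $P_{C'} \le P_C$ together with the chain inequalities around $y$.

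For the ``only if'' direction I would argue by construction. Assuming $P_{C'} \le P_C$ and $M_h' + M_l' \le M_h + M_l$, I fix a longest weighted chain of $C$ through $y$ and a globally longest weighted chain of $C$, and map $C'$ into $C$ by a level-function argument: send each variable $v$ of $C'$ to a canonical variable on one of these chains at the height given by the longest weighted path out of $v$, shifted so that $y'$ lands exactly at $y$. The bound $P_{C'} \le P_C$ is what lets the off-chain variables of $C'$ fit, and the bounds on $M_h, M_l$ are what make the shift around $y'$ consistent — but only when the ``orientation'' around $y'$ agrees with that around $y$; when it does not, I run the same construction from the reversed system $C''$, and the point of $C''$ is precisely that either $M_h' \le M_h$ and $M_l' \le M_l$ hold, or else (using $M_h' + M_l' \le M_h + M_l$) the reversed inequalities $M_l' \le M_h$ and $M_h' \le M_l$ hold. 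I expect the crux of the whole proof to be exactly this construction: checking that the level-function map extends coherently to all of $C'$ (or $C''$), not merely to a single chain, while simultaneously preserving the strict/non-strict type of every edge, keeping the image of $y'$ pinned to $y$, and verifying that the orientation dichotomy above is genuinely exhaustive. Handling strict versus non-strict edges carefully and the boundary cases where $C$ or $C'$ is already unsatisfiable for small $D$ are the routine-but-fiddly parts I would relegate to the end.
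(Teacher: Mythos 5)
Your reduction of the universally quantified inequality to the two numerical conditions $P_{C'} \le P_C$ and $M_h' + M_l' \le M_h + M_l$ is correct, and it is in substance everything the paper itself offers: the paper's entire proof is the single sentence that the lemma is a direct consequence of Lemma~\ref{lem:comp-answer}. So you have gone considerably further than the source. Your ``if'' direction is also essentially right, once one makes explicit the (clearly intended but unstated) requirement that the homomorphism send $y'$ to $y$; without that pinning the ``if'' direction is trivially false, so that reading is forced.

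The genuine gap sits exactly where you flag the crux: the orientation dichotomy is not exhaustive. From $M_h' + M_l' \le M_h + M_l$ alone one cannot conclude that either ($M_h' \le M_h$ and $M_l' \le M_l$) or ($M_h' \le M_l$ and $M_l' \le M_h$) holds; take $M_h = M_l = 1$ and $M_h' = 0$, $M_l' = 2$. Worse, this is not a repairable defect of your level-function construction but a counterexample to the ``only if'' direction of the lemma as stated: let $C$ be $a < y < b$ and let $C'$ consist of $f < c < y'$ together with $d < y'$. Then $\ans{y, C, D} = \ans{y', C', D} = |D| - 2$ for every $D$ with at least three elements, and both counts are $0$ otherwise, so the hypothesis holds with equality for all $D$; yet no homomorphism pinning $y'$ to $y$ exists from $C'$ (which needs a strict chain of length $2$ below $y'$, while $C$ has only length $1$ below $y$) nor from its flip $C''$ (which needs length $2$ above $y$). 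The counting argument controls only the sum $M_h + M_l$, never the individual summands, and flipping merely swaps the two summands --- it cannot redistribute weight between them. So the construction you propose cannot be completed, and no construction can; a correct statement would have to either phrase the conclusion directly in terms of $P_C$ and $M_h + M_l$, or import additional structural hypotheses on $M$ and $M'$ (e.g., that they are equal sets in the sense of the rest of the paper) under which the dichotomy becomes exhaustive.
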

\begin{proof} Direct consequence of the Lemma \ref{lem:comp-answer}.
\end{proof}

\begin{theorem} Given two count distinct queries, it is decidable to
  check if they are equivalent.
\end{theorem}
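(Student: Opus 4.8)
The plan is to reduce the equivalence test to the characterization announced in the abstract: $q_1 \equiv q_2$ if and only if some pair taken from $\{\core{q_1},\core{\flip{q_1}}\}$ and $\{\core{q_2},\core{\flip{q_2}}\}$ is $\bar{x}$-isomorphic (hence at most four isomorphism tests). Decidability is then immediate, since every constituent operation terminates: computing a core is effective by Lemma~\ref{thm:complexity} (indeed in $\Delta^{\pi}_3$); computing $\flip{\cdot}$ reduces to computing the maximal equal set $\equivv{y}{\cdot}$, which is effective -- guess the candidate set in $\Sigma^P_2$ and verify maximality with a homomorphism oracle; and testing $\ISMA{\bar{x}}{}$ for two conjunctive queries with comparisons of polynomial size is a finite search over active-domain bijections together with linearization checks. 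Composing these bounds also yields the $\Delta^{\pi}_3$ upper bound claimed in the abstract.

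Soundness of the test is the easy direction. If $q'$ is a core of $q$ then $q'$ is also its own core (the three defining conditions hold via the identity map), so Theorem~\ref{thm:core} together with Lemma~\ref{lem:equivalenceCDCore} gives $q \equiv \core{q}$; Lemma~\ref{lem:equiv-flip} gives $q \equiv \flip{q}$, and hence $q \equiv \core{\flip{q}}$. Therefore any of the four isomorphisms, fed through Lemma~\ref{lem:equivalenceCDCore}, chains into $q_1 \equiv \cdots \equiv q_2$.

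For completeness, assume $q_1 \equiv q_2$. First forget the aggregation: let $\hat{q}_i$ be the conjunctive query with comparisons with head $\hat{q}_i(\bar{x})$ and the same body as $q_i$. Since a group value appears in the output of $q_i$ exactly when its count is nonzero, $q_1 \equiv q_2$ forces $\hat{q}_1 \equiv \hat{q}_2$, so by Lemma~\ref{lem:equivalenceCQ} the cores $\core{\hat{q}_1}$ and $\core{\hat{q}_2}$ are $\bar{x}$-isomorphic; this already pins down everything in $\core{q_1},\core{q_2}$ except the location of the counted variable and the orientation of the comparisons inside $\equivv{y}{q_i}$. For the count, fix a group value $\bar{d}$: by Lemma~\ref{lem:comp-answer} the number of distinct values of $y$ realizable over an ordered value set $D$ is $|D| - (M_h + M_l)$, where $M_h$ and $M_l$ are the longest strict up- and down-chains from $y$ inside $\equivv{y}{q_i}$. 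Equality of the counts over all databases forces $\ans{y,C_1,D} = \ans{y,C_2,D}$ for all such $D$, so Lemma~\ref{lem:flip-char}, applied in both directions, makes the two comparison DAGs agree up to a flip. Using that $q_i \ISMA{\bar{x}}{} \flip{q_i}$ and combining this orientation-level match with the $\bar{x}$-isomorphism of the projected cores, one of $\core{q_1},\core{\flip{q_1}}$ becomes $\bar{x}$-isomorphic to one of $\core{q_2},\core{\flip{q_2}}$, as required.

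The main obstacle is exactly this last stitching step. Two points need care. First, the counted variable $y$ is not fixed by $\sigma = \bar{x}\mapsto\bar{x}$, so core minimization may relocate it; one must show that the comparison skeleton around $y$ surviving in $\core{q_i}$ is precisely $\equivv{y}{q_i}$ together with its $\extn{\cdot}$-completed chains to the rest of the query, so that the DAG homomorphism of Lemma~\ref{lem:flip-char} is forced to be an isomorphism of extended, core queries rather than merely a homomorphism. Second, because the count constrains only $M_h + M_l$, one must rule out queries with genuinely different comparison structure but equal $M_h + M_l$; here the maximality of $\equivv{y}{\cdot}$ (Theorem~\ref{thm:equiv}) together with the fact that cores are already $\extn{\cdot}$-closed is what collapses every such pseudo-solution into an honest flip. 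Once these are in place, the biconditional -- and hence the decidability claim -- follows.
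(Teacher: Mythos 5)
Your proposal follows essentially the same route as the paper: the identical four-way isomorphism test on $\{\core{q_1},\flip{\cdot}\}$ versus $\{\core{q_2},\flip{\cdot}\}$, with soundness obtained by chaining Lemma~\ref{lem:equivalenceCQ}, Lemma~\ref{lem:equivalenceCDCore}, and Lemma~\ref{lem:equiv-flip}, and decidability from the effectiveness of core, flip, and isomorphism checking. In fact you go somewhat beyond the paper's proof sketch, which only argues sufficiency and asserts necessity: your completeness outline via the support argument, Lemma~\ref{lem:comp-answer}, and Lemma~\ref{lem:flip-char} -- together with your explicit flagging of the ``stitching'' step as the remaining delicate point -- is consistent with, and more detailed than, what the paper itself provides.
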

\begin{proof}[idea] The algorithm to check the equivalence of
  count-distinct queries is as follows. Lets consider the two queries
  to be compared are $\coqu{q_1}{\bar{x}}{y}$ $A_1(\bar{x}, {y}, z)
  \land C_1(\bar{x}, {y}, z)$ and $\coqu{q_2}{\bar{x}}{y}$
  $A_2(\bar{x}, {y}, z) \land C_2(\bar{x}, {y}, z)$,

\begin{itemize}

\item Compute $q'_1=\core{q_1} \text{ and } q'_2=\core{q_2}$;
\item Compute $q'_{1f}=\flip{q_1} \text{ and } q'_{2f}=\core{q_2}$;
\item If for all $q"_1\in \{q'_1, q'_{1f}\}$ and for all $q"_2\in
  \{q'_2, q'_{2f}\}$ $q"_1\not \ISMA{\bar{x}}{}q"_2$ then $q_1$ and $q_2$
  are not equal; otherwise, they are equal.
\end{itemize}

We need to show that that the isomorphism between the core or flip
of the cores of the queries is a sufficient and necessary condition
for deciding their equivalence. Using Lemma \ref{lem:equivalenceCQ},
we know that the $q_1$ is equal to $q'_1$ and $q_2$ is equal to
$q'_2$. Lemma \ref{lem:equiv-flip}, implies that $q'_{1f}$ is equal to
$q'_{1}$ and $q'_{2f}$ is equal to $q'_{2}$. As a result if one of
$q'_1$ or $q'_{1f}$ is isomorphic to one $q'_{2}$ ot $q'_{2f}$ then we
can infer that $q_1$ and $q_2$ are also equal.

\end{proof}


\section{Conclusion}
In this paper we discussed the problem of equivalence of
count-distinct queries with comparisons, and proved the decidability
of the problem through introducing two new notions of core of queries
with comparisons and their flip over dense ordered domains. It is
still open what will happen over discrete domains. Since we address
the problem of equality through checking isomorphism, another
legitimate open question is what will happen to the problem of
containment of count-distinct queries.5


\bibliographystyle{abbrv}
\bibliography{main-bib}

\end{document}